\DeclarePairedDelimiter\floor{\lfloor}{\rfloor}
\numberwithin{equation}{section}
\newcommand{\EE}{\mathbb{E}}
\newcommand{\PP}{\mathbb{P}}
\newcommand{\Cov}{\mathbb{C}\mbox{ov}}
\newcommand{\Var}{\mathbb{V}\mbox{ar}}
\newcommand{\Z}{\mathbb{Z}}
\newcommand{\R}{\mathbb{R}}
\newcommand{\N}{\mathbb{N}}
\newcommand{\iid}{\mbox{\textit{i.i.d.}}}
\theoremstyle{definition}
\newtheorem{assumption}{Assumption}
\newtheorem{assumption*}{Assumption}
\newtheorem{algorithm}{Algorithm}
\theoremstyle{plain}
\newtheorem{theorem}{Theorem}
\newtheorem{lemma}{Lemma}
\newtheorem{corollary}{Corollary}
\theoremstyle{remark}
\title{V1 A Residual Bootstrap for Conditional Expected Shortfall}
\author{Alexander Heinemann and Sean Telg}
\date{\today}
\begin{document}

\begin{center}

\scshape
\LARGE{\textbf{A Residual Bootstrap for Conditional Expected Shortfall}}
\normalfont
\end{center}
\begin{center}
\par\vspace{0.5cm}
\text{Alexander Heinemann$^\dagger$}
\hspace{3cm}
\text{Sean Telg$^\dagger$}
\par\vspace{0.5cm}
\text{$^\dagger$Department of Quantitative Economics}
\par
\text{Maastricht University}
\par\vspace{0.5cm}
\text{\today}
\vspace{1cm}
\end{center}

\begin{abstract}
This paper studies a fixed-design residual bootstrap method for the two-step estimator of \cite{francq2015risk} associated with the conditional Expected Shortfall. For a general class of volatility models the bootstrap is shown to be asymptotically valid under the conditions imposed by \cite{beutner2018residual}. A simulation study is conducted revealing that the average coverage rates are satisfactory for most settings considered. There is no clear evidence to have a preference for any of the three proposed bootstrap intervals. This contrasts results in \cite{beutner2018residual} for the VaR, for which the reversed-tails interval has a superior performance. \\ \\ 
\textbf{Keywords:} Residual bootstrap; Expected Shortfall; GARCH\\
\textbf{JEL codes:} C14; C15; C22; C58
\end{abstract}

\doublespacing

\section{Introduction}
\label{sec:5.1}

The assessment of market risk is a key challenge that financial market participants face on a daily basis.
To evaluate the risk, financial institutions primarily employ the risk measure Value-at-Risk (VaR) to meet the capital requirements enforced by the Basel Committee on Banking Supervision. Despite its popularity, the VaR is not a coherent risk measure as it fails to fulfill the subadditivity property \citep{artzneretal1999}. A coherent alternative is the related risk measure Expected Shortfall (ES). For a given level $\alpha$, it is defined as the expected return in the worst $100\alpha \%$ cases and is therefore sometimes called Expected Tail Loss.\footnote{In the literature, ES is also sometimes referred to as conditional VaR since it is defined as the expected loss given a VaR exceedence. Since conditional refers to temporal dependence (i.e.\ conditional on past returns) in this paper, we refrain from using this term to prevent any confusion.} In contrast to the VaR, the ES provides valuable information on the severity of an incurred loss, which makes it the preferred risk measure in practice (c.f.\ \citeauthor{acerbitasche2002a}, \citeyear{acerbitasche2002a}; \citeyear{acerbitasche2002b}). Consequently, the Basel Committee published revised standards in January 2016 resembling a shift from VaR towards ES as the underlying risk measure \citep{osmundsen2018}.

In the literature there is an increasing interest in conditional risk measures, which take into account the temporal dependence of asset returns. Frequently, the volatility dynamics are specified by a (semi-) parametric model such that the conditional ES can be expressed as the product of the conditional volatility and the ES of the innovations’ distribution (c.f.\  \citeauthor{francq2015risk}, \citeyear{francq2015risk}, Example 2). The latter can be treated as additional parameter, which is generally unknown just like the parameters of the conditional volatility model. Inferring the parameters from data leads to an evaluation of the conditional ES that is prone to estimation risk. As argued in \cite{beutner2018residual} this estimation uncertainty can be substantial for risk measures related to extreme events.

The uncertainty around point estimates is typically determined by asymptotic theory, in which one replaces unknown quantities in the limiting distribution by consistent estimates. For example, \cite{caiwang2008} and \cite{martinsetal2018} study the behavior of proposed nonparametric estimators for conditional VaR and ES, based on asymptotics and simulation studies. An alternative approach is
based on bootstrap approximations. Regarding the estimators of the volatility model’s parameters, several bootstrap methods have been examined, among which the sub-sample bootstrap \citep{hall2003inference}, the block bootstrap \citep{corradi2008bootstrap}, the wild bootstrap \citep{shimizu2009bootstrapping} and the residual bootstrap, both with recursive \citep{pascual2006bootstrap,hidalgo2007goodness} and fixed design (\citeauthor{shimizu2009bootstrapping}, \citeyear{shimizu2009bootstrapping}; \citeauthor{cavaliere2018fixed}, \citeyear{cavaliere2018fixed}, \citeauthor{beutner2018residual}, \citeyear{beutner2018residual}). However, the estimation of the conditional ES has received only limited attention in the bootstrap literature. \cite{christoffersen2005estimation} construct intervals for conditional ES based on a recursive-design residual bootstrap method. 
\cite{gao2008estimation} compare coverage probabilities for conditional ES based on this bootstrap method and asymptotic normality results in their simulation study.

In this paper, we extend results of \cite{beutner2018residual} derived for conditional VaR to the conditional ES estimator. In particular, we follow the two-step procedure of \cite{francq2015risk} for the estimation of the underlying parameters. In a first step, we obtain estimates of the parameters of the stochastic volatility model by quasi-maximum-likelihood (QML) estimation. Based on the model's residuals, an estimate for the innovations’ ES is obtained in the second step. We propose a fixed-design residual bootstrap method to mimic the finite sample distribution of this two-step estimator for a general class of volatility models. Moreover, an algorithm is provided for the construction of bootstrap intervals for the conditional ES. 

The remainder of the paper is organized as follows. Section \ref{sec:5.2} introduces a general class of volatility models and derives the conditional ES. The two-step estimation procedure is described in Section \ref{sec:5.3} and corresponding asymptotic results are provided under the assumptions imposed by \cite{beutner2018residual}. In Section \ref{sec:5.4}, a fixed-design residual bootstrap method is proposed and proven to be consistent. In addition, bootstrap intervals are constructed for the conditional ES. Section \ref{sec:5.5} consists of a Monte Carlo study. Section \ref{sec:5.6} concludes. Auxiliary results and proofs of the main results are gathered in the Appendix.

\section{Model}
\label{sec:5.2}
We consider conditional volatility models of the form
\begin{align}
\label{eq:5.2.1}
\epsilon_t = \sigma_t\eta_t,
\end{align}
with $t\in \Z$, where $\epsilon_t$ denotes the log-return, $\{\sigma_t\}$ is a volatility process and $\{\eta_t\}$ is a sequence of independent and identically distributed (\iid) variables. The volatility is assumed to be a measurable function of past observations
\begin{align}
\label{eq:5.2.2}
\sigma_{t+1}=\sigma_{t+1}(\theta_0)=\sigma(\epsilon_t,\epsilon_{t-1},\dots;\theta_0),
\end{align}
with $\sigma:\R^\infty\times \Theta\to(0,\infty)$ and $\theta_0$ denotes the true parameter vector belonging to the parameter space $\Theta \subset \R^r$, $r \in \N$. Various commonly used volatility models satisfy \eqref{eq:5.2.1}--\eqref{eq:5.2.2}; for examples see  \citeauthor{francq2015risk} (\citeyear{francq2015risk}, Table 1). Consider an arbitrary real-valued random variable $X$ (e.g.\ stock return) with cdf $F_X$. If $\EE_X[X^-]<\infty$ with $X^{-} = \max\{-X,0\}$, then the ES at level $\alpha \in (0,1)$ is finite and given by $ES_\alpha(X) = -\EE_X\big[X|X<F_X^{-1}(\alpha)\big]$.
Let $\mathcal{F}_{t-1}$ denote the $\sigma$-algebra generated by $\{\epsilon_u, u<t\}$. It follows that the conditional ES of $\epsilon_t$ given $\mathcal{F}_{t-1}$ at level $\alpha \in (0,1)$ is 
\begin{align}
\label{eq:5.2.3}
ES_\alpha(\epsilon_t|\mathcal{F}_{t-1}) =& \sigma (\epsilon_{t-1},\epsilon_{t-2},\dots;\theta_0) ES_\alpha(\eta_t).
\end{align}
As $\eta_t$ are \iid, the ES at level $\alpha$ of $\eta_t$ is constant for a given $\alpha$ and can be treated as a parameter. Setting $\mu_\alpha = -\EE\big[\eta_t|\eta_t<\xi_\alpha\big]$ with $\xi_\alpha = F^{-1}(\alpha)$ and $F$ denoting the cdf of $\eta_t$, \eqref{eq:5.2.3} reduces to 
\begin{align}
\label{eq:5.2.4}
ES_\alpha(\epsilon_t|\mathcal{F}_{t-1}) = \mu_\alpha \sigma_t(\theta_0).
\end{align}
Typically, $\alpha$ is chosen small (e.g.\ 5\% and 1\%) such that $\xi_\alpha<0$ and hence $\mu_\alpha>0$. Except for special cases\footnote{We derive the analytical expressions for $\mu_\alpha$ for the cases in which $\eta_t$ are normally as well as Student-$t$ distributed in Appendix B.}, $\mu_\alpha$ is unknown and needs to estimated just like $\theta_0$.

\section{Estimation}
\label{sec:5.3}
For the estimation of the parameters  $\theta_0$ and $\mu_\alpha$ we employ the two-step procedure of \citeauthor{francq2015risk} (\citeyear{francq2015risk}, Remark 3). First, the vector of the conditional volatility parameters $\theta_0$ is estimated by quasi-maximum-likelihood (QML). Since
\begin{align}
\label{eq:5.3.1}
\sigma_{t+1}(\theta) = \sigma(\epsilon_t,\epsilon_{t-1},\dots,\epsilon_{1}, \epsilon_{0},\epsilon_{-1},\dots;\theta)
\end{align}
can generally not be determined completely given a sample $\epsilon_1, ,\dots, \epsilon_n$, we replace the unknown presample observations by arbitrary values, say $\tilde{\epsilon}_t$, $t\leq 0$, yielding
\begin{align}
\label{eq:5.3.2}
\tilde{\sigma}_{t+1}(\theta) =& \sigma(\epsilon_t,\epsilon_{t-1},\dots,\epsilon_{1}, \tilde{\epsilon}_{0},\tilde{\epsilon}_{-1},\dots;\theta).
\end{align}
Then the QML estimator of $\theta_0$ is defined as a measurable solution $\hat{\theta}_n$ of 
\begin{align}
\label{eq:5.3.3}
\hat{\theta}_n=\arg\max_{\theta \in \Theta} \tilde{L}_n(\theta)
\end{align}
with the criterion function specified by
\begin{align*}
\tilde{L}_n(\theta) = \frac{1}{n}\sum_{t=1}^n \tilde{\ell}_t(\theta) \qquad \text{and} \qquad \tilde{\ell}_t(\theta)=-\frac{1}{2}\bigg(\frac{\epsilon_t}{\tilde{\sigma}_t(\theta)}\bigg)^2-\log \tilde{\sigma}_t(\theta).
\end{align*}
In the second step, $\mu_\alpha$ can be estimated on the basis of the first-step residuals, i.e. $\hat{\eta}_t=\epsilon_t/\tilde{\sigma}_t(\hat{\theta}_n)$. A reasonable estimator of $\mu_\alpha$ (c.f.\ \citeauthor{gao2008estimation}, \citeyear{gao2008estimation}) is given by
\begin{align}
\label{eq:5.3.4}
\hat{\mu}_{n,\alpha} = -\frac{\sum_{t=1}^n \hat{\eta}_t \mathbbm{1}_{\{\hat{\eta}_t<\hat{\xi}_{n,\alpha}\}}}{\sum_{t=1}^n  \mathbbm{1}_{\{\hat{\eta}_t<\hat{\xi}_{n,\alpha}\}}},
\end{align}
where $\hat{\xi}_{n,\alpha}$ is the empirical $\alpha$-quantile of $\hat{\eta}_1,\dots,\hat{\eta}_n$, i.e.\ $\hat{\xi}_{n,\alpha}=\hat{\mathbbm{F}}_n^{-1}(\alpha)$ with empirical distribution function  $\hat{\mathbbm{F}}_n(x)=\frac{1}{n}\sum_{t=1}^n \mathbbm{1}_{\{\hat{\eta}_t\leq x\}}$.

Having obtained estimators for $\theta_0$ and $\mu_\alpha$, we turn to the estimation of the conditional ES of the one-period ahead observation at level $\alpha$. For notational convenience, we use the abbreviation $ES_{n,\alpha}$ to denote $ES_{\alpha}(\epsilon_{n+1}|\mathcal{F}_n)$. Employing \eqref{eq:5.3.2}--\eqref{eq:5.3.4} we can estimate $ES_{n,\alpha}$ by
\begin{align}
\label{eq:5.3.5}
\widehat{ES}_{n,\alpha}=\hat{\mu}_{n,\alpha}\: \tilde{\sigma}_{n+1}\big(\hat{\theta}_n\big).
\end{align}
For the asymptotic analysis of \eqref{eq:5.3.3}--\eqref{eq:5.3.5} we assume the conditions of \cite{beutner2018residual}, which we restate for completeness.
\begin{assumption}{(Compactness)}
\label{as:5.1}
$\Theta$ is a compact subset of $\R^r$.
\end{assumption}

\begin{assumption}{(Stationarity \& Ergodicity)}
\label{as:5.2}
$\{\epsilon_t\}$ is a strictly stationary and ergodic solution of \eqref{eq:5.2.1} with \eqref{eq:5.2.2}.
\end{assumption}

\begin{assumption}{(Volatility process)}
\label{as:5.3}
For any real sequence $\{x_i\}$, the function $\theta\to\sigma(x_1,x_2,\dots;\theta)$ is continuous.  Almost surely, $\sigma_t(\theta)>\underline{\omega}$ for any $\theta \in \Theta$ and some $\underline{\omega}>0$ and $\EE[\sigma_t^s(\theta_0)]<\infty$ for some $s>0$. Moreover, for any $\theta \in \Theta$, we assume $\sigma_t(\theta_0)/\sigma_t(\theta)=1$ almost surely (a.s.) if and only if $\theta=\theta_0$.
\end{assumption}

\begin{assumption}{(Initial conditions)}
\label{as:5.4}
There exists a constant $\rho \in (0,1)$ and a random variable $C_1$ measurable with respect to $\mathcal{F}_0$ and $\EE[|C_1|^s]<\infty$ for some $s>0$ such that
\begin{enumerate}[(i)]
\item \label{as:5.4.1} $\sup_{\theta \in \Theta}|\sigma_t(\theta)-\tilde{\sigma}_t(\theta)|\leq C_1 \rho^t$;

\item \label{as:5.4.2} $\theta\to \sigma(x_1, x_2, \dots;\theta)$ has continuous second-order derivatives satisfying
\begin{align*}
\sup_{\theta \in \Theta}\bigg|\bigg|\frac{\partial \sigma_t(\theta)}{\partial \theta}-\frac{\partial \tilde{\sigma}_t(\theta)}{\partial \theta}\bigg|\bigg|\leq  C_1 \rho^t, \qquad \quad \sup_{\theta \in \Theta}\bigg|\bigg|\frac{\partial^2 \sigma_t(\theta)}{\partial \theta \partial \theta'}-\frac{\partial^2 \tilde{\sigma}_t(\theta)}{\partial \theta\partial \theta'}\bigg|\bigg|\leq C_1 \rho^t,
\end{align*}
where $||\cdot||$ denotes the Euclidean norm.
\end{enumerate}
\end{assumption}

\begin{assumption}{(Innovation process)} 
\label{as:5.5}
The innovations $\{\eta_t\}$ satisfy

\begin{enumerate}[(i)]
\item  \label{as:5.5.1}  $\eta_t\overset{iid}{\sim}F$ with $F$ being continuous, $\EE\big[\eta_t^2\big]=1$ and $\eta_t$ is independent of $\{\epsilon_u:u<t\}$;

\item  \label{as:5.5.2} $\eta_t$ admits a density $f$ which is continuous and strictly positive around $\xi_\alpha<0$; %$\eta_t$ admits a density $f$ which is continuous  with $f(\xi_\alpha)>0$  and $\xi_\alpha<0$; 
\item  \label{as:5.5.3}  $\EE\big[\eta_t^{4}\big]<\infty$.% for some $d$ (to be specified);

%\item  \label{as:5.5.4} $M=\sup_{x\in \R}|x|f(x)<\infty$.
\end{enumerate}
\end{assumption}

\begin{assumption}{(Interior)}
 \label{as:5.6}
$\theta_0$ belongs to the interior of $\Theta$ denoted by $\mathring{\Theta}$.
\end{assumption}

\begin{assumption}{(Non-degeneracy)}
\label{as:5.7}
There does not exist a non-zero $\lambda\in \R^r$ such that $\lambda'\frac{\partial \sigma_t(\theta_0)}{\partial \theta}=0$ almost surely.
\end{assumption}

\begin{assumption}{(Monotonicity)}
\label{as:5.8}
For any real sequence $\{x_i\}$ and for any $\theta_1,\theta_2 \in \Theta$ satisfying $\theta_1\leq \theta_2$ componentwise, we have $\sigma(x_1,x_2,\dots;\theta_1)\leq \sigma(x_1,x_2,\dots;\theta_2)$.
\end{assumption}

\begin{assumption}{(Moments)}
\label{as:5.9}
There exists a neighborhood $\mathscr{V}(\theta_0)$ of $\theta_0$ such that the following variables have finite expectation
\begin{align*}
\text{(i)}\sup_{\theta \in \mathscr{V}(\theta_0)}\bigg|\frac{ \sigma_t(\theta_0)}{\sigma_t(\theta)}\bigg|^a, \qquad \;\;\: \text{(ii)} \sup_{\theta \in \mathscr{V}(\theta_0)}\bigg|\bigg|\frac{1}{\sigma_t(\theta)}\frac{\partial \sigma_t(\theta)}{\partial \theta}\bigg|\bigg|^{b}, \qquad \;\;\: \text{(iii)} \sup_{\theta \in \mathscr{V}(\theta_0)}\bigg|\bigg|\frac{1}{\sigma_t(\theta)}\frac{\partial^2 \sigma_t(\theta)}{\partial \theta \partial \theta'}\bigg|\bigg|^c
\end{align*}
for some $a$, $b$, $c$ (to be specified).
\end{assumption}

\begin{assumption}{(Scaling Stability)}
\label{as:5.10}
 There exists a function $g$ such that for any $\theta \in \Theta$, for any $\lambda>0$, and any real sequence $\{x_i\}$
\begin{align*}
\lambda \sigma(x_1,x_2,\dots;\theta)=\sigma(x_1,x_2,\dots;\theta_\lambda),
\end{align*}
where $\theta_\lambda=g(\theta,\lambda)$ and $g$ is differentiable in $\lambda$.
\end{assumption}
For a discussion of the conditions we refer to \citeauthor{francq2015risk} (\citeyear{francq2015risk}, Section 2 and 3) and \citeauthor{beutner2018residual} (\citeyear{beutner2018residual}, Section 3). On the basis of the previous assumptions we extend the strong consistency result of \citeauthor{francq2015risk} (\citeyear{francq2015risk}, Theorem 1) to the estimator of the ES at level $\alpha$ of $\eta_t$.
%
%Based on the assumptions the subsequent theorem, which is partially due to \cite{francq2015risk}, states the estimators' strong consistency.
%
\begin{theorem}\textit{(Strong Consistency)}
\label{thm:5.1} Under Assumptions \ref{as:5.1}--\ref{as:5.3}, \ref{as:5.4}(\ref{as:5.4.1}) and \ref{as:5.5}(\ref{as:5.5.1}) the estimator  in \eqref{eq:5.3.3} is strongly consistent, i.e. $\hat{\theta}_n \overset{a.s.}{\to} \theta_0$.
If in addition Assumptions \ref{as:5.5}(\ref{as:5.5.3}), \ref{as:5.6},  and \ref{as:5.9}(i) hold with $a=-1,4$, then the  estimator in \eqref{eq:5.3.4} satisfies $\hat{\mu}_{n,\alpha} \overset{a.s.}{\to} \mu_\alpha$.
\end{theorem}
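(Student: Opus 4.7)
This part is essentially the classical Francq–Zakoïan QMLE argument. The plan is: (i) use Assumption \ref{as:5.4}(\ref{as:5.4.1}) to show that the initialisation error is negligible, i.e.\ $\sup_{\theta\in\Theta}|\tilde L_n(\theta)-L_n(\theta)|\to 0$ a.s., where $L_n$ is built from the stationary version $\sigma_t(\theta)$; (ii) apply the ergodic theorem (Assumption \ref{as:5.2}) together with a standard compactness/continuity argument (Assumptions \ref{as:5.1}, \ref{as:5.3}) to get $L_n(\theta)\to L(\theta)=\EE[\ell_t(\theta)]$ uniformly a.s.; (iii) show $L$ is uniquely maximised at $\theta_0$, which follows because $\EE[\ell_t(\theta_0)-\ell_t(\theta)]=\tfrac12\EE[\log(\sigma_t^2(\theta)/\sigma_t^2(\theta_0))+\sigma_t^2(\theta_0)/\sigma_t^2(\theta)-1]\ge 0$ with equality iff $\sigma_t(\theta)/\sigma_t(\theta_0)=1$ a.s., which by the identifiability part of Assumption \ref{as:5.3} forces $\theta=\theta_0$. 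An argmax‐continuity argument then yields $\hat\theta_n\overset{a.s.}\to\theta_0$.

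\textbf{Part 2 (consistency of $\hat\mu_{n,\alpha}$).} The target satisfies $\mu_\alpha=-\EE[\eta_t\mathbbm1_{\{\eta_t<\xi_\alpha\}}]/\alpha$, so the plan is to express $\hat\mu_{n,\alpha}$ as a ratio of sample averages over the residuals and take limits. Concretely, writing $\hat{\mathbbm F}_n(x)=n^{-1}\sum_{t=1}^n\mathbbm1_{\{\hat\eta_t\le x\}}$ and $\mathbbm F_n(x)=n^{-1}\sum_{t=1}^n\mathbbm1_{\{\eta_t\le x\}}$, I would first show the uniform Glivenko–Cantelli-type statement
\begin{equation*}
\sup_{x\in\R}\bigl|\hat{\mathbbm F}_n(x)-F(x)\bigr|\overset{a.s.}\to 0.
\end{equation*}
Since classical Glivenko–Cantelli (valid because $\{\eta_t\}$ is iid and $F$ continuous, Assumption \ref{as:5.5}(\ref{as:5.5.1})) gives $\sup_x|\mathbbm F_n-F|\to 0$ a.s., it suffices to control $\sup_x|\hat{\mathbbm F}_n-\mathbbm F_n|$. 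The key inequality is
\begin{equation*}
\hat\eta_t=\eta_t\,\frac{\sigma_t(\theta_0)}{\tilde\sigma_t(\hat\theta_n)},
\end{equation*}
so using $\hat\theta_n\to\theta_0$ a.s., Assumption \ref{as:5.4}(\ref{as:5.4.1}) (to replace $\tilde\sigma_t$ by $\sigma_t$ up to a geometrically decaying error) and the continuity of $\theta\mapsto\sigma_t(\theta)$ (Assumption \ref{as:5.3}), the factor $\sigma_t(\theta_0)/\tilde\sigma_t(\hat\theta_n)$ converges to $1$. Combining this with continuity of $F$ and a squeezing argument over small neighborhoods of $x$ gives $\sup_x|\hat{\mathbbm F}_n(x)-\mathbbm F_n(x)|\to 0$ a.s., hence the desired uniform convergence. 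By Assumption \ref{as:5.5}(\ref{as:5.5.2}) the density $f$ is strictly positive around $\xi_\alpha$, so inverting yields $\hat\xi_{n,\alpha}\overset{a.s.}\to\xi_\alpha$, and in particular the denominator of \eqref{eq:5.3.4} divided by $n$ converges a.s.\ to $\alpha$.

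\textbf{Numerator and closing the argument.} It remains to show $n^{-1}\sum_{t=1}^n\hat\eta_t\mathbbm1_{\{\hat\eta_t<\hat\xi_{n,\alpha}\}}\overset{a.s.}\to\EE[\eta_t\mathbbm1_{\{\eta_t<\xi_\alpha\}}]$. The idea is to split
\begin{equation*}
\frac{1}{n}\sum_{t=1}^n\hat\eta_t\mathbbm1_{\{\hat\eta_t<\hat\xi_{n,\alpha}\}}
=\frac{1}{n}\sum_{t=1}^n\eta_t\mathbbm1_{\{\eta_t<\xi_\alpha\}}+R_n,
\end{equation*}
where the leading term converges a.s.\ to $\EE[\eta_t\mathbbm1_{\{\eta_t<\xi_\alpha\}}]$ by the ergodic theorem (using $\EE[\eta_t^2]<\infty$ from Assumption \ref{as:5.5}(\ref{as:5.5.1})), and the remainder $R_n$ collects errors of the form $(\hat\eta_t-\eta_t)\mathbbm1_{\{\hat\eta_t<\hat\xi_{n,\alpha}\}}$ and $\eta_t(\mathbbm1_{\{\hat\eta_t<\hat\xi_{n,\alpha}\}}-\mathbbm1_{\{\eta_t<\xi_\alpha\}})$. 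The first is handled via $|\hat\eta_t-\eta_t|\le|\eta_t|\,|\sigma_t(\theta_0)/\tilde\sigma_t(\hat\theta_n)-1|$ together with a uniform bound $n^{-1}\sum_t|\eta_t|^2\,\sup_{\theta\in\mathscr V(\theta_0)}|\sigma_t(\theta_0)/\sigma_t(\theta)|^2$, whose expectation is finite by the Cauchy–Schwarz inequality using $\EE[\eta_t^4]<\infty$ (Assumption \ref{as:5.5}(\ref{as:5.5.3})) and Assumption \ref{as:5.9}(i) with $a=4$ (the case $a=-1$ guarantees the reciprocal bound needed to cover $\tilde\sigma_t(\hat\theta_n)$ in the denominator). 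The second error term is controlled using the uniform convergence of $\hat{\mathbbm F}_n$ and $\hat\xi_{n,\alpha}\to\xi_\alpha$, together with continuity of the map $x\mapsto\EE[\eta_t\mathbbm1_{\{\eta_t<x\}}]$ at $\xi_\alpha$. Taking the ratio of numerator and denominator gives $\hat\mu_{n,\alpha}\overset{a.s.}\to\mu_\alpha$.

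\textbf{Main obstacle.} The only non-routine point is the uniform control of the residuals, i.e.\ ensuring that replacing $\eta_t$ by $\hat\eta_t$ does not destroy either the uniform empirical-cdf convergence or the $L^1$-type convergence of the truncated mean; this is precisely where the moment conditions in Assumption \ref{as:5.9}(i) (with $a=-1$ and $a=4$), the fourth-moment condition \ref{as:5.5}(\ref{as:5.5.3}), and the geometric decay in \ref{as:5.4}(\ref{as:5.4.1}) all enter in a coordinated way.
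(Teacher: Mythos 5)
Your proposal is correct, but it takes a much longer route than the paper, which proves this theorem almost entirely by citation: the first claim is taken directly from \citeauthor{francq2015risk} (\citeyear{francq2015risk}, Theorem 1), and the second claim is obtained by writing, exactly as you do in your final step,
\begin{align*}
\hat{\mu}_{n,\alpha} = -\frac{\frac{1}{n}\sum_{t=1}^n \hat{\eta}_t \mathbbm{1}_{\{\hat{\eta}_t<\hat{\xi}_{n,\alpha}\}}}{\frac{1}{n}\sum_{t=1}^n  \mathbbm{1}_{\{\hat{\eta}_t<\hat{\xi}_{n,\alpha}\}}}
\end{align*}
and invoking \citeauthor{beutner2018residual} (\citeyear{beutner2018residual}, Lemma 2) for the almost sure convergence of the numerator to $-\EE[\eta_t\mathbbm{1}_{\{\eta_t<\xi_\alpha\}}]$ and of the denominator to $\alpha=\PP[\eta_t<\xi_\alpha]$. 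What you do differently is to reconstruct those two imported ingredients from first principles: your Part 1 is essentially the Francq--Zako\"{\i}an QMLE consistency proof, and your Part 2 re-derives the content of the cited Lemma 2 (residual Glivenko--Cantelli, $\hat\xi_{n,\alpha}\overset{a.s.}{\to}\xi_\alpha$, and convergence of the truncated mean of the residuals). That buys self-containedness and makes visible exactly where Assumptions \ref{as:5.4}(\ref{as:5.4.1}), \ref{as:5.5}(\ref{as:5.5.3}) and \ref{as:5.9}(i) with $a=-1,4$ enter, whereas the paper's two-line proof buys brevity at the price of opacity. Two points in your sketch are glossed and are precisely where the cited proofs do real work: in Part 1, $\EE[\ell_t(\theta)]$ can equal $-\infty$ away from $\theta_0$, so the ``uniform LLN plus argmax continuity'' step is actually carried out via the standard compact-covering argument applied to local suprema of $\ell_t$ rather than a plain uniform convergence statement; and in Part 2, the factor $\sigma_t(\theta_0)/\tilde\sigma_t(\hat\theta_n)$ does not converge to $1$ uniformly in $t$, so the transfer from $\mathbbm{F}_n$ to $\hat{\mathbbm{F}}_n$ and the control of $R_n$ must go through shrinking-neighborhood bounds of the form $\sup_{\theta\in\mathscr{V}(\theta_0)}|\sigma_t(\theta_0)/\sigma_t(\theta)-1|$ combined with the ergodic theorem and the stated moment conditions, which is exactly how the result you are re-proving is established in \cite{beutner2018residual}. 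These are fixable at the level of detail you indicate, so I read them as compressed rather than wrong.
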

\begin{proof}
\citeauthor{francq2015risk} (\citeyear{francq2015risk}, Theorem 1) establish $\hat{\theta}_n \overset{a.s.}{\to} \theta_0$. Moreover, we have
\begin{align*}
\hat{\mu}_{n,\alpha} = -\frac{\frac{1}{n}\sum_{t=1}^n \hat{\eta}_t \mathbbm{1}_{\{\hat{\eta}_t<\hat{\xi}_{n,\alpha}\}}}{\frac{1}{n}\sum_{t=1}^n  \mathbbm{1}_{\{\hat{\eta}_t<\hat{\xi}_{n,\alpha}\}}} \overset{a.s.}{\to} -\frac{\EE[ \eta_t \mathbbm{1}_{\{\eta_t<\xi_\alpha\}}]}{\PP[\eta_t<\xi_{\alpha}]}= - \EE\big[\eta_t |\eta_t<\xi_\alpha\big]=\mu_\alpha
\end{align*}
by \citeauthor{beutner2018residual} (\citeyear{beutner2018residual}, Lemma 2), which verifies the second claim.
\end{proof}
To lighten notation, we henceforth write $D_t(\theta) =\frac{1}{\sigma_t(\theta)}\frac{\partial\sigma_t(\theta)}{\partial \theta}$  and drop the argument when evaluated at the true parameter, i.e.\ $D_t=D_t(\theta_0)$. The next result provides the joint asymptotic distribution of $\hat{\theta}_n$ and $\hat{\mu}_{n,\alpha}$ and is due to \cite{francq2012risk}.
\begin{theorem}\textit{(Asymptotic Distribution)}
\label{thm:5.2} Suppose Assumptions \ref{as:5.1}--\ref{as:5.7}, \ref{as:5.9} and \ref{as:5.10} hold with $a=b=4$ and $c=2$. Then, we have
\begin{align}
\label{eq:5.3.6}
      \begin{pmatrix}
      \sqrt{n}(\hat{\theta}_n-\theta_0)\\
    \sqrt{n}(\hat{\mu}_{n,\alpha} - \mu_\alpha)
      \end{pmatrix}
\overset{d}{\to}N\big(0, \Gamma_\alpha\big) \qquad \mbox{with}\qquad
\Gamma_\alpha=
      \begin{pmatrix}
      \frac{\kappa-1}{4}J^{-1} & \varphi_\alpha J^{-1}\Omega\\
    \varphi_\alpha \Omega'J^{-1} & \nu_\alpha
      \end{pmatrix},
\end{align}
where $\kappa = \EE[\eta_t^4]$, $\Omega = \EE[D_t]$, $J=\EE[D_tD_t']$, $\varphi_\alpha = \frac{1}{2}x_\alpha-\mu_\alpha \frac{\kappa-1}{4}$, $\nu_\alpha = \sigma_\alpha^2 - x_\alpha \mu_\alpha +\frac{\kappa-1}{4}\mu_\alpha^2$,
$\sigma_\alpha^2 = \frac{1}{\alpha^2}\Var[\big(\eta_t -\xi_\alpha\big)\mathbbm{1}_{\{\eta_t< \xi_\alpha\}}]$ and $x_\alpha = -\frac{1}{\alpha}\Cov\big[\eta_t^2, (\eta_t-\xi_\alpha) \mathbbm{1}_{\{\eta_t<\xi_\alpha\}}\big]$.
\end{theorem}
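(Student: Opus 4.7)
The strategy is to reduce both coordinates of the left-hand side to iid Bahadur-type linear representations in the innovations $\{\eta_t\}$ and the $\mathcal{F}_{t-1}$-measurable scores $D_t$, and then apply a joint central limit theorem. First I would invoke the standard QMLE linearisation
\begin{align*}
\sqrt{n}(\hat{\theta}_n-\theta_0) = \tfrac{1}{2}J^{-1}\frac{1}{\sqrt{n}}\sum_{t=1}^n (\eta_t^2-1)D_t + o_p(1),
\end{align*}
obtained via a second-order Taylor expansion of $\tilde{L}_n$ around $\theta_0$. Its validity under Assumptions \ref{as:5.1}--\ref{as:5.7} and \ref{as:5.9} with $a=b=4$, $c=2$ is classical (Francq--Zakoïan); Assumption \ref{as:5.4} handles the initial-condition remainder via the geometric bound $C_1\rho^t$, and the score $\tfrac{1}{2}(\eta_t^2-1)D_t$ has zero mean and finite variance because $\eta_t$ is independent of $\mathcal{F}_{t-1}$ and $\EE[\eta_t^4]=\kappa<\infty$.

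For the second coordinate I would decompose $\hat{\mu}_{n,\alpha}-\mu_\alpha = A_n + B_n + R_n$, where $A_n$ is the infeasible estimator based on the \emph{true} innovations $\eta_t$, $B_n$ is the plug-in correction from replacing $\eta_t$ by $\hat{\eta}_t = \eta_t\sigma_t(\theta_0)/\tilde{\sigma}_t(\hat{\theta}_n)$, and $R_n$ is a remainder. For $A_n$, a standard iid Bahadur representation for the empirical ES functional gives
\begin{align*}
\sqrt{n} A_n = -\frac{1}{\alpha\sqrt{n}}\sum_{t=1}^n \bigl\{(\eta_t-\xi_\alpha)\mathbbm{1}_{\{\eta_t<\xi_\alpha\}} + \alpha\mu_\alpha\bigr\} + o_p(1),
\end{align*}
the quantile-estimation contribution dropping out because the integrand vanishes at $\xi_\alpha$. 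For $B_n$, a first-order Taylor expansion of $\theta\mapsto\sigma_t(\theta)$ yields $\hat{\eta}_t - \eta_t = -\eta_t D_t'(\hat{\theta}_n-\theta_0) + o_p$; substituting and using the independence of $\eta_t$ from $D_t$ together with $\EE[\eta_t\mathbbm{1}_{\{\eta_t<\xi_\alpha\}}] = -\alpha\mu_\alpha$ gives $\sqrt{n}B_n = \mu_\alpha\Omega'\sqrt{n}(\hat{\theta}_n-\theta_0) + o_p(1)$, where Assumption \ref{as:5.10} legitimises this as a genuine scale perturbation that can be absorbed through $\theta$.

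Combining the two linearisations, the joint distribution is driven by $n^{-1/2}\sum_t \Psi_t$ where
\begin{align*}
\Psi_t = \begin{pmatrix}\tfrac{1}{2}J^{-1}(\eta_t^2-1)D_t\\ -\alpha^{-1}\bigl\{(\eta_t-\xi_\alpha)\mathbbm{1}_{\{\eta_t<\xi_\alpha\}}+\alpha\mu_\alpha\bigr\} + \tfrac{1}{2}\mu_\alpha\Omega'J^{-1}(\eta_t^2-1)D_t\end{pmatrix}
\end{align*}
is a strictly stationary ergodic martingale-difference sequence with finite second moments. The standard CLT then delivers joint asymptotic normality, and the three blocks of $\Gamma_\alpha$ come out by direct covariance algebra: $\tfrac{\kappa-1}{4}J^{-1}$ from $\Var(\tfrac{1}{2}(\eta_t^2-1))\cdot \EE[D_tD_t']$; $\nu_\alpha = \sigma_\alpha^2 - x_\alpha\mu_\alpha + \tfrac{\kappa-1}{4}\mu_\alpha^2$ by expanding the variance of the second component and recognising $\Cov(\eta_t^2,(\eta_t-\xi_\alpha)\mathbbm{1}_{\{\eta_t<\xi_\alpha\}}) = -\alpha x_\alpha$; and the cross block $\varphi_\alpha J^{-1}\Omega$, with $\varphi_\alpha = \tfrac{1}{2}x_\alpha-\mu_\alpha\tfrac{\kappa-1}{4}$, from the same covariance identity combined with $\EE[D_t] = \Omega$.

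The main obstacle is rigorous justification of the linearisations of $A_n$ and $B_n$: both involve non-smooth indicator functions applied at estimated arguments (the empirical quantile, and residuals depending on $\hat{\theta}_n$), so a uniform stochastic equicontinuity result in a class of indicator functions near $\xi_\alpha$ is required. The continuity and strict positivity of $f$ around $\xi_\alpha<0$ (Assumption \ref{as:5.5}(\ref{as:5.5.2})) together with the moment bounds in Assumption \ref{as:5.9} with $a=b=4$ furnish the uniform control of $\sigma_t(\theta_0)/\tilde{\sigma}_t(\theta)-1$ over $n^{-1/2}$-neighbourhoods of $\theta_0$ and the tightness of the residual empirical process that the Bahadur argument needs.
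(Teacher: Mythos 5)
Your overall route---a QMLE linearisation, a Bahadur-type representation for the residual-based ES estimator, a plug-in correction for replacing $\eta_t$ by $\hat{\eta}_t$, and a joint martingale-difference CLT with covariance algebra---is the right one; the paper itself offers no proof of Theorem \ref{thm:5.2} (it is attributed to Francq and Zako\"ian, 2012), and your expansion is precisely the sample analogue of the bootstrap expansion displayed in Section \ref{sec:5.4.2}. However, your plug-in term carries the wrong sign, and this breaks the final covariance computation. Since $\hat{\eta}_t-\eta_t=-\eta_t D_t'(\hat{\theta}_n-\theta_0)+o_p$, the correction to $-\frac{1}{n\alpha}\sum_t\hat{\eta}_t\mathbbm{1}_{\{\cdot\}}$ is
\begin{align*}
-\frac{1}{n\alpha}\sum_{t=1}^n(\hat{\eta}_t-\eta_t)\mathbbm{1}_{\{\eta_t<\xi_\alpha\}}\approx \frac{1}{\alpha}\,\EE\big[\eta_t\mathbbm{1}_{\{\eta_t<\xi_\alpha\}}\big]\,\Omega'(\hat{\theta}_n-\theta_0)=-\mu_\alpha\Omega'(\hat{\theta}_n-\theta_0),
\end{align*}
so $\sqrt{n}B_n=-\mu_\alpha\Omega'\sqrt{n}(\hat{\theta}_n-\theta_0)+o_p(1)$, not $+\mu_\alpha\Omega'$ as you write (intuitively, overestimated volatility shrinks the residuals towards zero and lowers $\hat{\mu}_{n,\alpha}$); compare the entry $-\tfrac12\mu_\alpha\Omega'J^{-1}$ in the expansion of Section \ref{sec:5.4.2} and Lemma \ref{lem:5.2}. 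With your $+$ sign, the ``direct covariance algebra'' you invoke would yield a cross term $\tfrac12 x_\alpha+\mu_\alpha\tfrac{\kappa-1}{4}$ and a $(2,2)$ entry $\sigma_\alpha^2+x_\alpha\mu_\alpha+\tfrac{\kappa-1}{4}\mu_\alpha^2$, not the $\varphi_\alpha$ and $\nu_\alpha$ of \eqref{eq:5.3.6}, so as written the argument does not produce the stated $\Gamma_\alpha$.

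Two further points need repair. First, the centering in your representation of $A_n$ must be $\alpha(\xi_\alpha+\mu_\alpha)$, not $\alpha\mu_\alpha$: since $\EE[(\eta_t-\xi_\alpha)\mathbbm{1}_{\{\eta_t<\xi_\alpha\}}]=-\alpha(\mu_\alpha+\xi_\alpha)$, your summands have mean $-\alpha\xi_\alpha\neq 0$ and the displayed $\sqrt{n}A_n$ diverges; the quantile-estimation effect does not ``drop out'' --- it is exactly what turns $\eta_t\mathbbm{1}$ into $(\eta_t-\xi_\alpha)\mathbbm{1}$ with centering $\alpha(\xi_\alpha+\mu_\alpha)$ (cf.\ the term $D_n^*$ in Section \ref{sec:5.4.2}); this slip does not affect the variance but the representation should be stated correctly. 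Second, your covariance algebra for the $(2,2)$ block silently requires $\Omega'J^{-1}\Omega=1$: expanding the variance of the (sign-corrected) second component gives $\sigma_\alpha^2-x_\alpha\mu_\alpha\,\Omega'J^{-1}\Omega+\tfrac{\kappa-1}{4}\mu_\alpha^2\,\Omega'J^{-1}\Omega$, and this equals $\nu_\alpha$ only because Assumption \ref{as:5.10} implies, upon differentiating $\lambda\sigma_t(\theta)=\sigma_t(g(\theta,\lambda))$ at $\lambda=1$, that $D_t'\,\partial g(\theta_0,\lambda)/\partial\lambda\vert_{\lambda=1}=1$ a.s., hence $J^{-1}\Omega=\partial g(\theta_0,\lambda)/\partial\lambda\vert_{\lambda=1}$ and $\Omega'J^{-1}\Omega=1$. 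Making this identity explicit is essential (it is the reason Assumption \ref{as:5.10} appears in the hypotheses). With the sign corrected and these two gaps filled, your argument is sound and coincides with the expansion underlying the paper's bootstrap analysis.
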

In order to evaluate $\sigma^{2}_{\alpha}$ and $x_{\alpha}$ in Theorem \ref{thm:5.2}, we need expressions for the variance and covariance term respectively. After basic manipulation we find
\begin{align*}
&\Var[\big(\eta_t -\xi_\alpha\big)\mathbbm{1}_{\{\eta_t< \xi_\alpha\}}] = p_{\alpha} + \alpha + \xi_{\alpha}(1-\alpha)\alpha(\xi_{\alpha}-2\mu_{\alpha}) - (\alpha\mu_{\alpha})^{2} \\   
&\Cov\big[\eta_t^2, (\eta_t-\xi_\alpha) \mathbbm{1}_{\{\eta_t<\xi_\alpha\}}\big] = -\alpha\mu_{\alpha} - (\xi_{\alpha}p_{\alpha} - q_{\alpha}),    
\end{align*}
with $p_{\alpha} = \EE[\eta_{t}^{2}\mathbbm{1}_{\{\eta_t<\xi_\alpha\}}] - \alpha$ and $q_\alpha = \EE[\eta_t^{3}\mathbbm{1}_{\{\eta_t< \xi_\alpha\}}]$. In a GARCH($p,q$) setting, \cite{gao2008estimation} quantify the uncertainty around $\hat{\theta}_n$ and $\hat{\mu}_{n,\alpha}$ using \eqref{eq:5.3.6} while replacing  the unknown quantities in $\Gamma_\alpha$ by  estimates. In the same spirit, $\xi_\alpha$ and $\mu_\alpha$ can be substituted by $\hat{\xi}_{n,\alpha}$ and $\hat{\mu}_{n,\alpha}$ while $\Omega$, $J$, $q_\alpha$, $p_\alpha$ and $\kappa$ can be replaced by
\begin{align}
\label{eq:5.3.7}
\begin{split}
\hat{\Omega}_n=&\frac{1}{n}\sum_{t=1}^n\hat{D}_t, \qquad   \qquad \: \hat{q}_{n,\alpha}=\frac{1}{n}\sum_{t=1}^n \hat{\eta}_t^3 \mathbbm{1}_{\{\hat{\eta}_t<\hat{\xi}_{n,\alpha}\}}, \qquad \quad   \hat{\kappa}_n=\frac{1}{n}\sum_{t=1}^n\hat{\eta}_t^4,\\
\hat{J}_n=&\frac{1}{n}\sum_{t=1}^n\hat{D}_t\hat{D}_t', \qquad  \quad  \hat{p}_{n,\alpha}=\frac{1}{n}\sum_{t=1}^n \hat{\eta}_t^2 \mathbbm{1}_{\{\hat{\eta}_t<\hat{\xi}_{n,\alpha}\}}-\alpha,
\end{split}
\end{align}
with $\hat{D}_t = \tilde{D}_t(\hat{\theta}_n)$ and $\tilde{D}_t(\theta) = \frac{1}{\tilde{\sigma}_t(\theta)}\frac{\partial\tilde{\sigma}_t(\theta)}{\partial \theta}$. The strong consistency of the estimators in \eqref{eq:5.3.7} follows from \citeauthor{beutner2018residual} (\citeyear{beutner2018residual}, Lemma 2 and Theorem 1). Based on \eqref{eq:5.3.7} we obtain a consistent estimator for $\Gamma_\alpha$ denoted by $\hat{\Gamma}_{n,\alpha}$. Note that in the joint asymptotic distribution in Theorem \ref{thm:5.2} the pdf of $\eta_{t}$ does not occur. This is in contrast to the limiting distribution of the parameters that comprise the conditional VaR estimator \citep{beutner2018residual}. Hence, no density estimation (by e.g.\ kernel smoothing) is required here.

The asymptotic behavior of the conditional ES estimator can be studied by employing Theorem \ref{thm:5.2}. Since the conditional volatility varies over time, a limiting distribution cannot exist and therefore the concept of weak convergence is not applicable in this context. \citeauthor{beutner2017justification} (\citeyear{beutner2017justification}, Section 4) advocate a \textit{merging} concept that generalizes the notion of weak convergence, i.e.\ two sequences of (random) probability measures $\{P_n\},\{Q_n\}$ \textit{merge} (in probability) if and only if their bounded Lipschitz distance $d_{BL}(P_n,Q_n)$ converges to zero (in probability). Assuming two independent samples, one for parameter estimation and one for conditioning, the delta method suggests that the ES estimator, centered at $ES_{n,\alpha}$ and inflated by $\sqrt{n}$, and
\begin{align}
\label{eq:5.3.8}
N\left(0, \begin{pmatrix}
    \mu_\alpha \frac{\partial \sigma_{n+1}(\theta_0)}{\partial \theta}\\
    \sigma_{n+1}
      \end{pmatrix}' \Gamma_\alpha \begin{pmatrix}
    \mu_\alpha \frac{\partial \sigma_{n+1}(\theta_0)}{\partial \theta}\\
         \sigma_{n+1}
      \end{pmatrix}\right)
\end{align}
given $\mathcal{F}_n$ merge in probability. Equation \eqref{eq:5.3.8} highlights once more the relevance of the merging concept since the conditional variance still depends on $n$ and does not converge as $n \to \infty$. In combination with Theorem \ref{thm:5.1} and $\hat{\Gamma}_{n,\alpha} \overset{a.s.}{\to} \Gamma_\alpha$, $100(1-\gamma)\%$ confidence intervals for $ES_{n,\alpha}$ can be constructed with bounds given by
\begin{align}
\label{eq:5.3.9}
\widehat{ES}_{n,\alpha}\pm \frac{\Phi^{-1}(\gamma/2)}{\sqrt{n}}
\left\{\begin{pmatrix}
    \hat{\mu}_{n,\alpha} \frac{\partial \tilde{\sigma}_{n+1}(\hat{\theta}_n)}{\partial \theta}\\
         \tilde{\sigma}_{n+1}(\hat{\theta}_n)
      \end{pmatrix}' \hat{\Gamma}_{n,\alpha} \begin{pmatrix}
    \hat{\mu}_{n,\alpha} \frac{\partial \tilde{\sigma}_{n+1}(\hat{\theta}_n)}{\partial \theta}\\
    \tilde{\sigma}_{n+1}(\hat{\theta}_n)
      \end{pmatrix}\right\}^{1/2},
\end{align}
where $\Phi$ denotes the standard normal cdf. It has to be mentioned that researchers rarely have a replicate, independent of the original series, to their disposal.\footnote{Exceptions would include some experimental settings.} An asymptotic justification for the interval on the basis of a single sample is given in \cite{beutner2017justification}. Bootstrap methods offer an alternative way to quantify the uncertainty around the estimators.

\section{Bootstrap}
\label{sec:5.4}

\subsection{Fixed-Design Residual Bootstrap}
\label{sec:5.4.1}

We propose a fixed-design residual bootstrap procedure, described in Algorithm \ref{alg:5.1}, to approximate the distribution of the estimators in \eqref{eq:5.3.3}-\eqref{eq:5.3.5}.

\begin{algorithm}\textit{(Fixed-design residual bootstrap)}
\label{alg:5.1}
\begin{enumerate}

\item For $t=1,\dots, n$, generate  $\eta_t^* \overset{iid}{\sim} \hat{\mathbbm{F}}_n$ and the bootstrap observation $\epsilon_t^* = \tilde{\sigma}_t(\hat{\theta}_n) \eta_t^*$, where $\tilde{\sigma}_t(\theta)$ and $\hat{\theta}_n$ are given in \eqref{eq:5.3.2} and
\eqref{eq:5.3.3}, respectively.
\item Calculate the bootstrap estimator 
\begin{align}
\label{eq:5.4.1}
\hat{\theta}_n^* = \arg \max_{\theta \in \Theta}L_n^*(\theta)
\end{align}
with the bootstrap criterion function given by
\begin{align*}
L_n^*(\theta) = \frac{1}{n}\sum_{t=1}^n \ell_t^*(\theta) \qquad \text{and} \qquad \ell_t^*(\theta)=-\frac{1}{2}\bigg(\frac{\epsilon_t^{*}}{\tilde{\sigma}_t(\theta)}\bigg)^2-\log \tilde{\sigma}_t(\theta).
\end{align*}
\item For $t=1,\dots,n$ compute the bootstrap residual $\hat{\eta}_t^* = \epsilon_t^*/\tilde{\sigma}_t(\hat{\theta}_n^*)$
and obtain 
\begin{align}
\label{eq:5.4.2}
\hat{\mu}_{n,\alpha}^* =& -\frac{\sum_{t=1}^n\hat{\eta}_t^* \mathbbm{1}_{\{\hat{\eta}_t^*<\hat{\xi}_{n,\alpha}^*\}}}{\sum_{t=1}^n \mathbbm{1}_{\{\hat{\eta}_t^*<\hat{\xi}_{n,\alpha}^*\}}},
\end{align}
 where $\hat{\xi}_{n,\alpha}^*$ is the empirical $\alpha$-quantile of $\hat{\eta}_1^*,\dots,\hat{\eta}_n^*$.

\item Obtain the bootstrap estimator of the conditional ES
\begin{align}
\label{eq:5.4.3}
\widehat{ES}_{n,\alpha}^{*}=\hat{\mu}_{n,\alpha}^{*}\: \tilde{\sigma}_{n+1}\big(\hat{\theta}_n^{*}\big).
\end{align}
\end{enumerate}
\end{algorithm}
\noindent In the following subsection we show the asymptotic validity of the fixed-design bootstrap procedure described in Algorithm \ref{alg:5.1}. 

\subsection{Bootstrap Consistency}
\label{sec:5.4.2}

Subsequently, we employ the usual notation for bootstrap asymptotics, i.e.\ “$\overset{p^*}{\to}$" and “$\overset{d^*}{\to}$", as well as the standard bootstrap stochastic order symbol “$o_{p^*}(1)$" (c.f.\ \citeauthor{chang2003sieve}, \citeyear{chang2003sieve}). The asymptotic validity of the bootstrap corresponding to the stochastic volatility part is shown in \citeauthor{beutner2018residual} (\citeyear{beutner2018residual}, Proposition 1). Therefore, we focus only on $\hat{\mu}_{n,\alpha}^*$. By construction, we have $\sum_{t=1}^n \mathbbm{1}_{\{\hat{\eta}_t^*<\hat{\xi}_{n,\alpha}^*\}}=\floor{\alpha n}+1$, where $\floor{x}$ denotes the largest integer not exceeding $x$. Defining $\alpha_n=\frac{\floor{\alpha n}+1}{n}$, we standardize \eqref{eq:5.4.2} such that the bootstrap estimator satisfies
\begin{align}
\label{eq:5.4.4}
\begin{split}
    \sqrt{n}(\hat{\mu}_{n,\alpha}^*-\hat{\mu}_{n,\alpha}) =& -\frac{1}{\alpha_n}\frac{1}{\sqrt{n}}\sum_{t=1}^n \Big(\hat{\eta}_t^* \mathbbm{1}_{\{\hat{\eta}_t^*<\hat{\xi}_{n,\alpha}^*\}}+\alpha_n \hat{\mu}_{n,\alpha}\Big)\\
    =& -\frac{1}{\alpha_n} \big(A_n^*+B_n^*+C_n^*+D_n^*\big),
    \end{split}
\end{align}
where the scaling factor $-1/\alpha_n$ in \eqref{eq:5.4.4} converges to $-1/\alpha$ since $\alpha\leq \alpha_n \leq \alpha +\frac{1}{n}$. The different terms in brackets are given by
{\allowdisplaybreaks
\begin{align*}
A_n^*&=\frac{1}{\sqrt{n}}\sum_{t=1}^n\big(\hat{\eta}_t^* -\hat{\xi}_{n,\alpha}\big)\big(\mathbbm{1}_{\{\hat{\eta}_t^*< \hat{\xi}_{n,\alpha}^*\}}-\mathbbm{1}_{\{\eta_t^*< \hat{\xi}_{n,\alpha}\}}\big), \\
B_n^* &=\hat{\xi}_{n,\alpha}\frac{1}{\sqrt{n}}\sum_{t=1}^n \big(\mathbbm{1}_{\{\hat{\eta}_t^*< \hat{\xi}_{n,\alpha}^*\}}-\alpha_n\big), \\
C_n^* &=\frac{1}{\sqrt{n}}\sum_{t=1}^n\big(\hat{\eta}_t^* -\eta_t^*\big)\mathbbm{1}_{\{\eta_t^*< \hat{\xi}_{n,\alpha}\}}, \\
D_n^* &=\frac{1}{\sqrt{n}}\sum_{t=1}^n\Big(\big(\eta_t^* -\hat{\xi}_{n,\alpha}\big)\mathbbm{1}_{\{\eta_t^*< \hat{\xi}_{n,\alpha}\}}+\alpha_n \big( \hat{\xi}_{n,\alpha}+\hat{\mu}_{n,\alpha}\big)\Big).
\end{align*}}
Employing arguments of \citeauthor{chen2007nonparametric} (\citeyear{chen2007nonparametric}, Lemma 2) Lemma \ref{lem:5.1} in Appendix \ref{app:5.A} states the asymptotic negligibility of $A_n^*$, i.e.\ $A_n^*\overset{p^*}{\to}0$ in probability. The term $B_n^* = 0$ since $\frac{1}{n}\sum_{t=1}^n \mathbbm{1}_{\{\hat{\eta}_t^*<\hat{\xi}_{n,\alpha}^*\}}=\alpha_n$ by construction. Further, Lemma \ref{lem:5.2} in Appendix \ref{app:5.A} states that $C_n^*=\alpha \mu_\alpha \Omega \sqrt{n}\big(\hat{\theta}_n^*-\hat{\theta}_n\big)+o_{p^*}(1)$ in probability. Last, we have $D_n^*\overset{d^*}{\to}N(0,\nu_\alpha)$ almost surely by Lemma \ref{lem:5.3} in Appendix \ref{app:5.A}. The previous discussion together with the asymptotic expansion of $\sqrt{n}\big(\hat{\theta}_n^*-\hat{\theta}_n\big)$ in \citeauthor{beutner2018residual} (\citeyear{beutner2018residual}, Equation 4.4) yields
\begin{equation*}
      \resizebox{\textwidth}{!}{$\begin{pmatrix}
      \sqrt{n}(\hat{\theta}_n^*-\hat{\theta}_n)\\
    \sqrt{n}(\hat{\mu}_{n,\alpha}^* - \hat{\mu}_{n,\alpha})
      \end{pmatrix}  
      \text{=}\! 
      \begin{pmatrix}
      \frac{1}{2}J^{-1}&O_{r\times 1}\\
    -\frac{1}{2}\mu_\alpha\Omega'J^{-1} & -\frac{1}{\alpha}
      \end{pmatrix}\!\!\!
        \begin{pmatrix}
      \frac{1}{\sqrt{n}}\sum\limits_{t=1}^n \hat{D}_t\big(\eta_t^{*2}-1\big)\\
%      D_n^*
   \frac{1}{\sqrt{n}}\sum\limits_{t=1}^n\Big(\big(\eta_t^* -\hat{\xi}_{n,\alpha}\big)\mathbbm{1}_{\{\eta_t^*< \hat{\xi}_{n,\alpha}\}}+\alpha_n \big( \hat{\xi}_{n,\alpha}+\hat{\mu}_{n,\alpha}\big)\Big)
      \end{pmatrix}\!\! +\! o_{p^*}(1)$}
\end{equation*}
%}
%
in probability. Employing Lemma \ref{lem:5.3} once more leads to the paper's main result.
\begin{theorem}\textit{(Boostrap consistency)}
\label{thm:5.3}
Suppose Assumptions \ref{as:5.1}--\ref{as:5.10} hold with $a=\pm 12$, $b=12$ and $c=6$. Then, we have
\begin{align*}
      \begin{pmatrix}
      \sqrt{n}(\hat{\theta}_n^*-\hat{\theta}_n)\\
    \sqrt{n}(\hat{\mu}_{n,\alpha}^* - \hat{\mu}_{n,\alpha})
      \end{pmatrix}
\overset{d^*}{\to}N\big(0, \Gamma_\alpha\big)
\end{align*}
in probability.
\end{theorem}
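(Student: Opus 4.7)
The plan is to leverage the asymptotic linearization displayed immediately before the theorem statement, which expresses the bootstrap estimation error as a \emph{deterministic} linear transformation of a two-component bootstrap score vector $V_n^{*}$ plus a remainder of order $o_{p^*}(1)$ in probability. Since the transformation matrix, call it $M$, is deterministic, it suffices to establish a joint bootstrap CLT for $V_n^{*}$ in probability and then verify that $M\Sigma M'=\Gamma_\alpha$, where $\Sigma$ denotes the limiting bootstrap covariance of $V_n^{*}$. The final step is then an application of the continuous mapping theorem in the form of a Slutsky-type argument compatible with the bootstrap merging concept.

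To obtain the joint CLT for $V_n^{*}$ I would invoke the Cram\'er--Wold device: for arbitrary $\lambda\in\R^{r+1}$, the scalar $\lambda' V_n^{*}$ is, conditionally on the original sample, a sum of independent (but not identically distributed, since $\hat{D}_t$ depends on $t$) random variables. A Lindeberg--Feller argument along the exact lines of Lemma~\ref{lem:5.3} then applies: the conditional variance converges in probability by Theorem~\ref{thm:5.1} and the strong consistency of the sample moments in \eqref{eq:5.3.7} (which rests on \citeauthor{beutner2018residual}, \citeyear{beutner2018residual}, Lemma~2), while the Lindeberg condition is secured by the strengthened moment bounds $a=\pm 12$, $b=12$, $c=6$ in Assumption~\ref{as:5.9}. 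Reapplying Lemma~\ref{lem:5.3} to $\lambda'V_n^{*}$ is the step alluded to in the sentence immediately preceding the theorem.

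Once $V_n^{*}\overset{d^*}{\to}N(0,\Sigma)$ in probability is in hand, I would compute $\Sigma$ block by block. The $(1,1)$ block converges to $(\kappa-1)J$ using $\tfrac{1}{n}\sum \hat{D}_t\hat{D}_t'\to J$ almost surely and $\text{Var}^*(\eta_t^{*2})\to \kappa-1$ almost surely. The $(2,2)$ block is handled exactly as in the scalar invocation of Lemma~\ref{lem:5.3}. The off-diagonal block converges to $\Omega$ times the bootstrap covariance between $\eta_t^{*2}$ and $(\eta_t^{*}-\hat{\xi}_{n,\alpha})\mathbbm{1}_{\{\eta_t^{*}<\hat{\xi}_{n,\alpha}\}}$, whose limit is expressible in $p_\alpha$, $q_\alpha$ and $\mu_\alpha$. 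Evaluating $M\Sigma M'$ and using the definitions of $\varphi_\alpha$ and $\nu_\alpha$ in Theorem~\ref{thm:5.2}, together with the standard identities $\EE[\eta_t^2]=1$ and $\EE[\eta_t\mathbbm{1}_{\{\eta_t<\xi_\alpha\}}]=-\alpha\mu_\alpha$, would then reproduce $\Gamma_\alpha$ entry by entry.

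The main obstacle is the off-diagonal block of $\Sigma$. One must show that the bootstrap cross-moment between the volatility score and the ES score converges in probability to its population analogue, despite the latter involving the data-dependent empirical quantile $\hat{\xi}_{n,\alpha}$ rather than $\xi_\alpha$. This is essentially the joint-vector extension of the delicate steps that Lemma~\ref{lem:5.1} and Lemma~\ref{lem:5.2} carry out in the scalar decomposition \eqref{eq:5.4.4}, so I anticipate the argument being a small amplification of theirs combined with the consistency of $\hat{\xi}_{n,\alpha}$ for $\xi_\alpha$. All remaining algebra is mechanical once this convergence is secured.
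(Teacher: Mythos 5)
Your proposal follows essentially the same route as the paper: the paper's proof of Theorem~\ref{thm:5.3} is exactly the linearization built from \eqref{eq:5.4.4}, Lemmas~\ref{lem:5.1}--\ref{lem:5.2}, $B_n^*=0$ and the expansion of $\sqrt{n}(\hat{\theta}_n^*-\hat{\theta}_n)$, followed by the joint CLT of Lemma~\ref{lem:5.3} (itself proved via the Cram\'er--Wold/Lindeberg argument you describe) and the sandwich computation $M\Psi_\alpha M'=\Gamma_\alpha$. The only remark is that Lemma~\ref{lem:5.3} is already stated and proved for the full two-component score vector, with the cross-covariance block obtained directly from the bootstrap moment convergence in \citeauthor{beutner2018residual} (\citeyear{beutner2018residual}, Lemma~5), so the off-diagonal term you single out as the main obstacle does not require any amplification of the Lemma~\ref{lem:5.1}/\ref{lem:5.2}-type arguments.
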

Theorem \ref{thm:5.3} is useful to validate the bootstrap for the conditional ES estimator. For the asymptotic behavior of the conditional ES estimator we refer to \eqref{eq:5.3.8} and the text preceding it. The following corollary is established.
\begin{corollary}
\label{cor:5.1}
Under the assumptions of Theorem \ref{thm:5.3} the conditional distribution of $\sqrt{n}\big(\widehat{ES}_{n,\alpha}^{*}-\widehat{ES}_{n,\alpha}\big)$ given $\mathcal{F}_n$ and \eqref{eq:5.3.8} given $\mathcal{F}_n$ merge in probability.
\end{corollary}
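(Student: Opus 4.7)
The plan is to apply a conditional delta method to the bootstrap statistic in \eqref{eq:5.4.3}: linearize $\widehat{ES}_{n,\alpha}^*-\widehat{ES}_{n,\alpha}$ in $(\hat\theta_n^*-\hat\theta_n,\,\hat\mu_{n,\alpha}^*-\hat\mu_{n,\alpha})$, push the bootstrap CLT of Theorem \ref{thm:5.3} through the resulting linear map (whose coefficients are $\mathcal{F}_n$-measurable), and then close the remaining gap to \eqref{eq:5.3.8} by consistency.

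Starting from the algebraic identity
\[
\widehat{ES}_{n,\alpha}^*-\widehat{ES}_{n,\alpha}=\hat\mu_{n,\alpha}^*\bigl[\tilde\sigma_{n+1}(\hat\theta_n^*)-\tilde\sigma_{n+1}(\hat\theta_n)\bigr]+\tilde\sigma_{n+1}(\hat\theta_n)\bigl[\hat\mu_{n,\alpha}^*-\hat\mu_{n,\alpha}\bigr],
\]
a second-order Taylor expansion of the $\mathcal{F}_n$-measurable function $\theta\mapsto\tilde\sigma_{n+1}(\theta)$ around $\hat\theta_n$ (legitimate by Assumption \ref{as:5.4}(\ref{as:5.4.2})), together with $\sqrt{n}(\hat\theta_n^*-\hat\theta_n)=O_{p^*}(1)$ and $\hat\mu_{n,\alpha}^*\overset{p^*}{\to}\mu_\alpha$ from Theorem \ref{thm:5.3}, yields
\[
\sqrt{n}\bigl(\widehat{ES}_{n,\alpha}^*-\widehat{ES}_{n,\alpha}\bigr)=\hat h_n'\begin{pmatrix}\sqrt{n}(\hat\theta_n^*-\hat\theta_n)\\ \sqrt{n}(\hat\mu_{n,\alpha}^*-\hat\mu_{n,\alpha})\end{pmatrix}+o_{p^*}(1)
\]
in probability, where $\hat h_n:=\bigl(\hat\mu_{n,\alpha}\bigl(\tfrac{\partial\tilde\sigma_{n+1}(\hat\theta_n)}{\partial\theta}\bigr)',\tilde\sigma_{n+1}(\hat\theta_n)\bigr)'$ is $\mathcal{F}_n$-measurable. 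Conditioning on $\mathcal{F}_n$ freezes $\hat h_n$, so Theorem \ref{thm:5.3} combined with the conditional continuous mapping theorem delivers
\[
d_{BL}\!\Bigl(\mathcal{L}\bigl(\sqrt{n}(\widehat{ES}_{n,\alpha}^*-\widehat{ES}_{n,\alpha})\,\big|\,\mathcal{F}_n\bigr),\,N\bigl(0,\hat h_n'\Gamma_\alpha\hat h_n\bigr)\Bigr)\overset{p}{\to}0.
\]

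To finish, let $h_n:=\bigl(\mu_\alpha\bigl(\tfrac{\partial\sigma_{n+1}(\theta_0)}{\partial\theta}\bigr)',\sigma_{n+1}(\theta_0)\bigr)'$ so that the conditional law in \eqref{eq:5.3.8} is $N(0,h_n'\Gamma_\alpha h_n)$. Theorem \ref{thm:5.1} gives $\hat\mu_{n,\alpha}\overset{a.s.}{\to}\mu_\alpha$ and $\hat\theta_n\overset{a.s.}{\to}\theta_0$, while Assumption \ref{as:5.4} together with continuity of $\theta\mapsto\sigma(\cdot;\theta)$ and its first derivative yields $\hat h_n-h_n\overset{p}{\to}0$, hence $\hat h_n'\Gamma_\alpha\hat h_n-h_n'\Gamma_\alpha h_n\overset{p}{\to}0$. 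Since the BL distance between two centered normals is continuous in their variance parameter, this implies $d_{BL}\bigl(N(0,\hat h_n'\Gamma_\alpha\hat h_n),N(0,h_n'\Gamma_\alpha h_n)\bigr)\overset{p}{\to}0$, and a triangle inequality for $d_{BL}$ combines the two bounds to give the desired merging in probability.

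The main obstacle is the linearization step: one must verify that the quadratic Taylor remainder of $\tilde\sigma_{n+1}$ is $o_{p^*}(1)$ after $\sqrt{n}$-scaling, which requires $\|\partial^2\tilde\sigma_{n+1}(\bar\theta)/\partial\theta\partial\theta'\|$ to stay stochastically bounded uniformly over $\bar\theta$ in a shrinking neighborhood of $\hat\theta_n$. Assumption \ref{as:5.9}(iii), combined with the initial-condition control in Assumption \ref{as:5.4}(\ref{as:5.4.2}), supplies the needed uniform bound, and together with $\hat\mu_{n,\alpha}^*-\hat\mu_{n,\alpha}=o_{p^*}(1)$ justifies replacing $\hat\mu_{n,\alpha}^*$ by $\hat\mu_{n,\alpha}$ in the coefficient in front of the $\sqrt{n}$-scaled parameter increment.
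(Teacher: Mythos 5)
Your proposal is correct and follows essentially the same route the paper intends: the paper gives no separate proof of the corollary, relying on exactly this conditional delta-method argument (linearize $\widehat{ES}_{n,\alpha}^{*}-\widehat{ES}_{n,\alpha}$ with $\mathcal{F}_n$-measurable coefficients, push Theorem \ref{thm:5.3} through, and merge with \eqref{eq:5.3.8} via the bounded Lipschitz distance), as sketched in the discussion preceding \eqref{eq:5.3.8}. Your treatment of the Taylor remainder and of $\hat h_n-h_n\overset{p}{\to}0$ (which, strictly, uses the uniform moment bounds of Assumption \ref{as:5.9} and $\sqrt{n}$-consistency rather than continuity alone, since $\sigma_{n+1}(\cdot)$ changes with $n$) supplies detail the paper leaves implicit.
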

Having proven first-order asymptotic validity of the bootstrap procedure described in Section \ref{sec:5.4.1}, we turn to constructing bootstrap confidence intervals for ES.

\subsection{Bootstrap Confidence Intervals for ES}\label{sec:5.4.3}

Clearly, the ES evaluation in \eqref{eq:5.3.5} is subject to estimation risk that needs to be quantified. We propose the following algorithm to obtain approximately $100(1-\gamma)\%$ confidence intervals.  

\begin{algorithm}{\textit{(Fixed-design Bootstrap Confidence Intervals for ES)}}
\label{alg:5.2}
\begin{enumerate}

\item Acquire a set of $B$ bootstrap replicates, i.e.  $\widehat{ES}_{n,\alpha}^{*(b)}$ for $b=1,\dots,B$, by repeating Algorithm 1.

\item[2.1.] Obtain the \textit{equal-tailed percentile} (EP) interval
\begin{align}
\label{eq:77883241}
\bigg[\widehat{ES}_{n,\alpha}-\frac{1}{\sqrt{n}}\hat{G}_{n,B}^{*-1}(1-\gamma/2),\:\widehat{ES}_{n,\alpha}-\frac{1}{\sqrt{n}}\hat{G}_{n,B}^{*-1}(\gamma/2)\bigg]
\end{align}
with $\hat{G}_{n,B}^*(x)=\frac{1}{B}\sum_{b=1}^B \mathbbm{1}_{\big\{\sqrt{n}\big(\widehat{ES}_{n,\alpha}^{*(b)}-\widehat{ES}_{n,\alpha}\big)\leq x\big\}}$. 
\item[2.2.] Calculate the \textit{reversed-tails} (RT) interval
\begin{align}
\label{eq:77883242}
\bigg[\widehat{ES}_{n,\alpha}+\frac{1}{\sqrt{n}}\hat{G}_{n,B}^{*-1}(\gamma/2),\widehat{ES}_{n,\alpha}+\frac{1}{\sqrt{n}}\hat{G}_{n,B}^{*-1}(1-\gamma/2)\bigg].
\end{align}
\item[2.3.] Compute the \textit{symmetric} (SY) interval
\begin{align}
\label{eq:77883243}
\bigg[\widehat{ES}_{n,\alpha}-\frac{1}{\sqrt{n}}\hat{H}_{n,B}^{*-1}(1-\gamma),\:\widehat{ES}_{n,\alpha}+\frac{1}{\sqrt{n}}\hat{H}_{n,B}^{*-1}(1-\gamma)\bigg]
\end{align}
with $\hat{H}_{n,B}^*(x)=\frac{1}{B}\sum_{b=1}^B \mathbbm{1}_{\big\{\sqrt{n}\big|\widehat{ES}_{n,\alpha}^{*(b)}-\widehat{ES}_{n,\alpha}\big|\leq x\big\}}$.
\end{enumerate}
\end{algorithm}
For a discussion of the three interval types in Algorithm \ref{alg:5.2}, we refer to \citeauthor{beutner2018residual} (\citeyear{beutner2018residual}, Section 4.3). In the next section, features of the fixed-design bootstrap confidence intervals for the conditional ES are studied by means of simulations.

\section{Monte Carlo Experiment}
\label{sec:5.5}

To assess the proposed bootstrap procedure in finite samples, we consider a simulation setup similar to \cite{beutner2018residual}. The Data Generating Process (DGP) is a GARCH($1,1$), which falls in the class of conditional volatility models defined in \eqref{eq:5.2.1}--\eqref{eq:5.2.2}. More specifically, we consider
\begin{align*}
\begin{cases}
    &\epsilon_{t} = \sigma_{t}\eta_{t}, \\
    &\sigma_{t+1}^2 = \omega_0+ \alpha_0 \epsilon_t^2+ \beta_0 \sigma_t^2
\end{cases}
\end{align*}
with $\theta_0=(\omega_0,\alpha_0,\beta_0)^\prime$. Regarding the GARCH parameters we study two scenarios:
\begin{enumerate}[(i)]
    \item high persistence: $\theta_0=(0.05\times 20^2/252,0.15,0.8)^\prime$,
    \item low persistence: $\theta_0=(0.05\times 20^2/252,0.4,0.55)^\prime$.
\end{enumerate}
The innovations $\{\eta_t\}$ are drawn from two different distributions: the Student-$t$ distribution with $\nu=6$ degrees of freedom and the standard normal distribution (which corresponds to the case $\nu=\infty$). Whereas in the latter case the innovations are appropriately standardized, in the former we draw from the normalized density $f(x)=\frac{1}{\sigma_\nu}f_\nu(x/\sigma_\nu)$ such that $\EE[\eta_t^2]=1$, where $\sigma_\nu^2=\frac{\nu-2}{\nu}$ and $f_\nu(x)=\frac{\Gamma(\frac{\nu+1}{2})}{\sqrt{\nu \pi}\Gamma(\frac{\nu}{2})}\big(1+\frac{x^2}{\nu}\big)^{-\frac{\nu+1}{2}}$. In this setting, the ES of the innovations’ distribution reduces to $\mu_\alpha=\frac{f_{\nu-2}(\xi_\alpha)}{\alpha}$ with $\xi_\alpha=\sigma_\nu F_\nu^{-1}(\alpha)$ and $F_\nu(x)=\int_{-\infty}^x f_\nu(y) dy$; we refer to Appendix \ref{app:5.B} for details. 
For the experiment, the ES level takes two values: $\alpha \in \{0.01, 0.05 \}$. We consider four different sample sizes $n \in \{ 500; 1{,}000; 5{,}000; 10{,}000 \}$ and the number of bootstrap replicates is fixed at $B = 2{,}000$. For each model, we simulate $S = 2{,}000$ independent Monte Carlo trajectories. All simulations are carried out on a HP Z640 workstation with 16 cores using Matlab R2016a. The numerical optimization of the log-likelihood function is performed using the built-in function \textit{fmincon}. Parallel computing by means of \textit{parfor} is employed to reduce running time significantly.

\citeauthor{beutner2018residual} (\citeyear{beutner2018residual})
 demonstrate that the bootstrap distribution mimics adequately the finite sample distribution of the estimator of the volatility parameters.
In a similar fashion, we assess whether the bootstrap distribution (given a particular sample) mimics the distribution of the ES parameter estimator. 
\begin{figure}[tbp]
\centering
\begin{subfigure}[b]{0.41\textwidth}
	\centering
	\includegraphics[width=\textwidth]{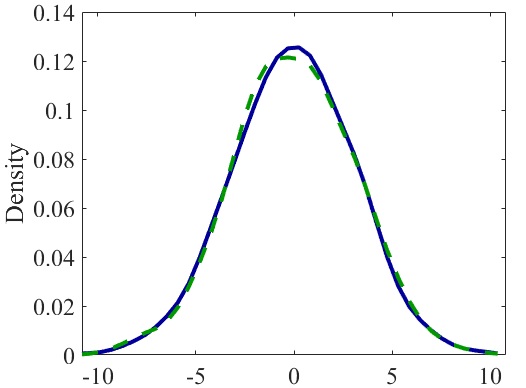}
            \caption{$\alpha = 0.05$}    
\end{subfigure}
\quad
\begin{subfigure}[b]{0.41\textwidth}  
	\centering 
	\includegraphics[width=\textwidth]{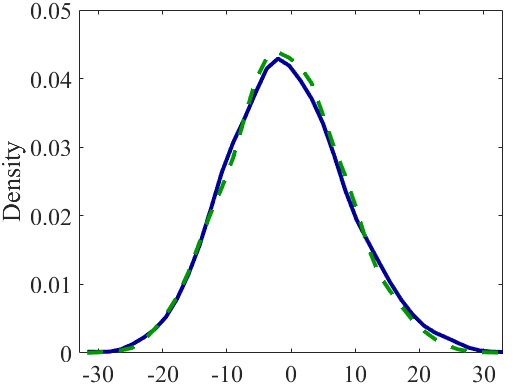}
	\caption{$\alpha = 0.01$} 
\end{subfigure}
\caption{Density estimates for the distribution of $\sqrt{n}(\hat{\mu}_{n\alpha}-\mu_\alpha)$ (full line) based on $S=2{,}000$ simulations and the fixed-design bootstrap distribution of $\sqrt{n}(\hat{\mu}_{n,\alpha}^*-\hat{\mu}_{n,\alpha})$ (dashed line) based on $B=2{,}000$ replications. The DGP is a GARCH($1,1$) with $\theta_0 = (0.08,0.15,0.8)'$, sample size $n=5,000$ and (normalized) Student-$t$ innovations (6 degrees of freedom).} 
\label{fig:5.1}
\end{figure}
Figure \ref{fig:5.1} displays the density estimates for the distribution of $\sqrt{n}(\hat{\mu}_{n,\alpha}-\mu_\alpha)$ and $\sqrt{n}(\hat{\mu}_{n,\alpha}^*-\hat{\mu}_{n,\alpha})$ in the high persistence case for $n=5{,}000$ with $\alpha \in \{0.01,0.05\}$. In both cases, we observe that the density plots are bell curves around the value zero, which supports the theoretical results of Theorem \ref{thm:5.2} and \ref{thm:5.3}. Since the density graphs for the other scenarios are very similar, they are not reported in order to conserve space. We continue by studying the coverage probabilities of the three bootstrap intervals introduced in Section \ref{sec:5.4.3}. 

\begin{table}[h]
\resizebox{\textwidth}{!}{\begin{tabular}{rccccccc}
\hline \hline
\multicolumn{1}{c}{\begin{tabular}[c]{@{}c@{}}Sample\\ size\end{tabular}} &                         & \begin{tabular}[c]{@{}c@{}}Average\\ coverage\end{tabular} & \begin{tabular}[c]{@{}c@{}}Av. coverage\\ below/above\end{tabular} & \begin{tabular}[c]{@{}c@{}}Average\\ length\end{tabular} & \begin{tabular}[c]{@{}c@{}}Average\\ coverage\end{tabular} & \begin{tabular}[c]{@{}c@{}}Av. coverage\\ below/above\end{tabular} & \begin{tabular}[c]{@{}c@{}}Average\\ length\end{tabular} \\ \hline
\multicolumn{1}{c}{}                                                      &                         &                                                            &                                                                    &                                                          &                                                            &                                                                    &                                                          \\
\multicolumn{1}{c}{}                                                      &                         & \multicolumn{3}{c}{low persistence}                                                                                                                                                        & \multicolumn{3}{c}{high persistence}                                                                                                                                                       \\
$500$                                                                     & \multicolumn{1}{c:}{EP} & $84.35$                                                    & $3.85$/$11.80$                                                     & \multicolumn{1}{c:}{$0.613$}                             & $83.15$                                                    & $4.45$/$12.40$                                                     & $0.825$                                                  \\
                                                                          & \multicolumn{1}{c:}{RT} & $85.30$                                                    & $2.35$/$12.35$                                                     & \multicolumn{1}{c:}{$0.613$}                             & $86.00$                                                    & $2.15$/$11.85$                                                     & $0.825$                                                  \\
                                                                          & \multicolumn{1}{c:}{SY} & $86.00$                                                    & $2.50$/$11.50$                                                     & \multicolumn{1}{c:}{$0.623$}                             & $85.45$                                                    & $2.65$/$11.90$                                                     & $0.833$                                                  \\ \hdashline
$1,000$                                                                   & \multicolumn{1}{c:}{EP} & $88.05$                                                    & $3.40$/$8.55$                                                      & \multicolumn{1}{c:}{$0.448$}                             & $87.55$                                                    & $3.95$/$8.50$                                                      & $0.610$                                                  \\
                                                                          & \multicolumn{1}{c:}{RT} & $88.70$                                                    & $2.70$/$8.60$                                                      & \multicolumn{1}{c:}{$0.448$}                             & $88.85$                                                    & $2.75$/$8.40$                                                      & $0.610$                                                  \\
                                                                          & \multicolumn{1}{c:}{SY} & $88.85$                                                    & $2.60$/$8.55$                                                      & \multicolumn{1}{c:}{$0.452$}                             & $88.55$                                                    & $3.00$/$8.45$                                                      & $0.612$                                                  \\ \hdashline
$5,000$                                                                   & \multicolumn{1}{c:}{EP} & $89.35$                                                    & $4.35$/$6.30$                                                      & \multicolumn{1}{c:}{$0.216$}                             & $89.85$                                                    & $3.35$/$6.80$                                                      & $0.287$                                                  \\
                                                                          & \multicolumn{1}{c:}{RT} & $89.90$                                                    & $3.75$/$6.35$                                                      & \multicolumn{1}{c:}{$0.216$}                             & $89.90$                                                    & $3.45$/$6.65$                                                      & $0.287$                                                  \\
                                                                          & \multicolumn{1}{c:}{SY} & $90.05$                                                    & $4.05$/$5.90$                                                      & \multicolumn{1}{c:}{$0.216$}                             & $90.15$                                                    & $3.20$/$6.65$                                                      & $0.287$                                                  \\ \hdashline
$10,000$                                                                  & \multicolumn{1}{c:}{EP} & $89.00$                                                    & $4.70$/$6.30$                                                      & \multicolumn{1}{c:}{$0.155$}                             & $89.60$                                                    & $4.35$/$6.05$                                                      & $0.204$                                                  \\
\multicolumn{1}{c}{}                                                      & \multicolumn{1}{c:}{RT} & $89.60$                                                    & $4.55$/$5.85$                                                      & \multicolumn{1}{c:}{$0.155$}                             & $89.55$                                                    & $4.55$/$5.90$                                                      & $0.204$                                                  \\
\multicolumn{1}{c}{}                                                      & \multicolumn{1}{c:}{SY} & $89.55$                                                    & $4.35$/$6.10$                                                      & \multicolumn{1}{c:}{$0.155$}                             & $89.55$                                                    & $4.45$/$6.00$                                                      & $0.204$                                                  \\ \hline \hline
\end{tabular}}
\caption{The table reports distinct features of the fixed-design bootstrap confidence intervals for the conditional ES at \textbf{level} $\bm{\alpha=0.05}$ with \textbf{nominal coverage} $\bm{1-\gamma=90\%}$. For each interval type and different sample sizes ($n$), the interval's average coverage rates (in $\%$), the average rate of the conditional ES being below/above the interval (in $\%$) and the interval's average length are tabulated. The intervals are based on $B=2{,}000$ bootstrap replications and the averages are computed using $S=2{,}000$ simulations. The DGP is a GARCH(1,1) with (normalized) \textbf{Student-$t$ innovations ($\bm{6}$ degrees of freedom)}.}
\label{tab:5.1}
\end{table}

Table \ref{tab:5.1} reports the results of the three $90\%$--bootstrap intervals for the $5\%$--ES with Student-$t$ distributed innovations (which we consider as benchmark). For moderate sample sizes, we observe satisfactory coverage probabilities that lie relatively close to the nominal level of $90\%$. For small sample size ($n=500)$, the intervals exhibit small under-coverage with values ranging from $4.00$ to $5.85$ percentage points ($pp$) below the nominal value.
%\textcolor{red}{For all sample sizes, we observe satisfactory coverage probabilities that lie relatively close to the nominal level of $90\%$. However, an apparent characteristic of the results is that there is under-coverage in all cases, except for the SY interval when $n=5,000$.}
%For small sample size ($n=500)$, the under-coverage is most striking with values ranging from 4.00 up to 5.85 percentage points ($pp$) below the nominal value. For larger sample sizes, the average coverage is very close to the desired $90\%$ with at most 2.45$pp$ deviation.
%
For all three intervals, we find that the average rate of the conditional ES being below the interval is considerably less than it being above the interval. This phenomenon is most pronounced in smaller sample size.
%This phenomenon, which is most pronounced for small sample size (in particular, $n=500$), is even more extreme than results in \cite{beutner2018residual} for the conditional VaR.
Concerning the average length of the intervals, we can make two important observations. Firstly, the SY interval is generally larger than the EP/RT interval.\footnote{By construction, the EP and the RT interval are of equal length.} As sample size increases, this gap disappears and all intervals' average lengths shrink. Secondly, the average length of intervals is larger in the high persistence case, as the conditional volatility varies more compared to the lower persistence case. In the following, we study deviations from the benchmark specification. Table \ref{tab:5.2} considers a change in the innovation distribution $F$, while Table \ref{tab:5.3} and \ref{tab:5.4} take into account a change in the ES level $\alpha$ and a change in the nominal coverage probability $100(1-\gamma)\%$, respectively.

\begin{table}[h]
\centering
\resizebox{\textwidth}{!}{\begin{tabular}{rccccccc}
\hline \hline
\multicolumn{1}{c}{\begin{tabular}[c]{@{}c@{}}Sample\\ size\end{tabular}} &                         & \begin{tabular}[c]{@{}c@{}}Average\\ coverage\end{tabular} & \begin{tabular}[c]{@{}c@{}}Av. coverage\\ below/above\end{tabular} & \begin{tabular}[c]{@{}c@{}}Average\\ length\end{tabular} & \begin{tabular}[c]{@{}c@{}}Average\\ coverage\end{tabular} & \begin{tabular}[c]{@{}c@{}}Av. coverage\\ below/above\end{tabular} & \begin{tabular}[c]{@{}c@{}}Average\\ length\end{tabular} \\ \hline
\multicolumn{1}{c}{}                                                      &                         & \multicolumn{1}{l}{}                                       & \multicolumn{1}{l}{}                                               & \multicolumn{1}{l}{}                                     & \multicolumn{1}{l}{}                                       & \multicolumn{1}{l}{}                                               & \multicolumn{1}{l}{}                                     \\
\multicolumn{1}{c}{}                                                      &                         & \multicolumn{3}{c}{low persistence}                                                                                                                                                        & \multicolumn{3}{c}{high persistence}                                                                                                                                                       \\
$500$                                                                     & \multicolumn{1}{c:}{EP} & $86.45$                                                    & $4.65$/$8.90$                                                      & \multicolumn{1}{c:}{$0.446$}                             & $86.30$                                                    & $5.10$/$8.60$                                                      & $0.549$                                                  \\
                                                                          & \multicolumn{1}{c:}{RT} & $87.85$                                                    & $2.15$/$10.00$                                                     & \multicolumn{1}{c:}{$0.446$}                             & $87.65$                                                    & $2.40$/$9.95$                                                      & $0.549$                                                  \\
                                                                          & \multicolumn{1}{c:}{SY} & $88.10$                                                    & $2.80$/$9.10$                                                      & \multicolumn{1}{c:}{$0.455$}                             & $87.50$                                                    & $3.50$/$9.00$                                                      & $0.555$                                                  \\ \hdashline
$1,000$                                                                   & \multicolumn{1}{c:}{EP} & $89.05$                                                    & $4.60$/$6.35$                                                      & \multicolumn{1}{c:}{$0.309$}                             & $89.10$                                                    & $4.25$/$6.65$                                                      & $0.397$                                                  \\
                                                                          & \multicolumn{1}{c:}{RT} & $89.25$                                                    & $3.05$/$7.70$                                                      & \multicolumn{1}{c:}{$0.309$}                             & $88.90$                                                    & $3.10$/$8.00$                                                      & $0.397$                                                  \\
                                                                          & \multicolumn{1}{c:}{SY} & $89.65$                                                    & $3.60$/$6.75$                                                      & \multicolumn{1}{c:}{$0.311$}                             & $89.60$                                                    & $3.50$/$6.90$                                                      & $0.399$                                                  \\ \hdashline
$5,000$                                                                   & \multicolumn{1}{c:}{EP} & $90.15$                                                    & $4.35$/$5.50$                                                      & \multicolumn{1}{c:}{$0.145$}                             & $89.25$                                                    & $4.20$/$6.55$                                                      & $0.178$                                                  \\
                                                                          & \multicolumn{1}{c:}{RT} & $89.75$                                                    & $3.90$/$6.35$                                                      & \multicolumn{1}{c:}{$0.145$}                             & $88.90$                                                    & $3.95$/$7.15$                                                      & $0.178$                                                  \\
                                                                          & \multicolumn{1}{c:}{SY} & $90.15$                                                    & $4.10$/$5.75$                                                      & \multicolumn{1}{c:}{$0.145$}                             & $89.00$                                                    & $4.10$/$6.90$                                                      & $0.179$                                                  \\ \hdashline
$10,000$                                                                  & \multicolumn{1}{c:}{EP} & $89.60$                                                    & $4.70$/$5.70$                                                      & \multicolumn{1}{c:}{$0.103$}                             & $89.45$                                                    & $4.95$/$5.60$                                                      & $0.126$                                                  \\
                                                                          & \multicolumn{1}{c:}{RT} & $89.85$                                                    & $4.20$/$5.95$                                                      & \multicolumn{1}{c:}{$0.103$}                             & $89.70$                                                    & $4.45$/$5.85$                                                      & $0.126$                                                  \\
                                                                          & \multicolumn{1}{c:}{SY} & $89.95$                                                    & $4.30$/$5.75$                                                      & \multicolumn{1}{c:}{$0.103$}                             & $89.40$                                                    & $4.90$/$5.70$                                                      & $0.126$                                                  \\ \hline \hline
\end{tabular}}
\caption{The table reports distinct features of the fixed-design bootstrap confidence intervals for the conditional ES at \textbf{level} $\bm{\alpha=0.05}$ with \textbf{nominal coverage} $\bm{1-\gamma=90\%}$. For each interval type and different sample sizes ($n$), the interval's average coverage rates (in $\%$), the average rate of the conditional ES being below/above the interval (in $\%$) and the interval's average length are tabulated. The intervals are based on $B=2{,}000$ bootstrap replications and the averages are computed using $S=2{,}000$ simulations. The DGP is a GARCH(1,1) with \textbf{Gaussian innovations}.}
\label{tab:5.2}
\end{table}

Table \ref{tab:5.2} considers the case where the innovations follow a standard normal distribution. Results are qualitatively similar to the benchmark. In particular, coverage rates are generally close to the nominal level for $n\geq 1{,}000$ yet the under-coverage in smaller sample sizes is less in this scenario. 
%in smaller sample sizes . However, almost all intervals still suffer from under-coverage, but deviations from the nominal value are smaller in this scenario. 
For example, the average coverage is at most $3.70pp$ below the $90\%$ level even when $n=500$. In general, results seem to be ``less extreme" compared to the benchmark: $(i)$ the average length of all intervals is smaller for all sample sizes and $(ii)$ the average rate of the conditional ES being above the interval lies closer to the corresponding rate below the interval. Moreover, we observe that there is no interval that outperforms the others
%. For the benchmark, the EP interval performs worst in almost all scenarios. This pattern is not so clear
in the case of $\eta_{t}$ being standard normally distributed.

\begin{table}[h]
\centering
\resizebox{\textwidth}{!}{\begin{tabular}{rccccccc}
\hline \hline
\multicolumn{1}{c}{\begin{tabular}[c]{@{}c@{}}Sample\\ size\end{tabular}} &                         & \begin{tabular}[c]{@{}c@{}}Average\\ coverage\end{tabular} & \begin{tabular}[c]{@{}c@{}}Av. coverage\\ below/above\end{tabular} & \begin{tabular}[c]{@{}c@{}}Average\\ length\end{tabular} & \begin{tabular}[c]{@{}c@{}}Average\\ coverage\end{tabular} & \begin{tabular}[c]{@{}c@{}}Av. coverage\\ below/above\end{tabular} & \begin{tabular}[c]{@{}c@{}}Average\\ length\end{tabular} \\ \hline
\multicolumn{1}{c}{}                                                      &                         &                                                            &                                                                    &                                                          &                                                            &                                                                    &                                                          \\
\multicolumn{1}{c}{}                                                      &                         & \multicolumn{3}{c}{low persistence}                                                                                                                                                        & \multicolumn{3}{c}{high persistence}                                                                                                                                                       \\
$500$                                                                     & \multicolumn{1}{c:}{EP} & $73.75$                                                    & $4.75$/$21.50$                                                     & \multicolumn{1}{c:}{$1.183$}                             & $72.25$                                                    & $5.45$/$22.30$                                                     & $1.566$                                                  \\
                                                                          & \multicolumn{1}{c:}{RT} & $71.30$                                                    & $0.75$/$27.95$                                                     & \multicolumn{1}{c:}{$1.183$}                             & $70.75$                                                    & $1.00$/$28.25$                                                     & $1.566$                                                  \\
                                                                          & \multicolumn{1}{c:}{SY} & $74.85$                                                    & $1.55$/$23.60$                                                     & \multicolumn{1}{c:}{$1.226$}                             & $73.70$                                                    & $2.00$/$24.30$                                                     & $1.616$                                                  \\ \hdashline
$1,000$                                                                   & \multicolumn{1}{c:}{EP} & $81.90$                                                    & $2.55$/$15.55$                                                     & \multicolumn{1}{c:}{$0.944$}                             & $81.55$                                                    & $2.80$/$15.65$                                                     & $1.260$                                                  \\
                                                                          & \multicolumn{1}{c:}{RT} & $80.55$                                                    & $0.85$/$18.60$                                                     & \multicolumn{1}{c:}{$0.944$}                             & $79.45$                                                    & $1.30$/$19.25$                                                     & $1.260$                                                  \\
                                                                          & \multicolumn{1}{c:}{SY} & $82.20$                                                    & $1.25$/$16.55$                                                     & \multicolumn{1}{c:}{$0.955$}                             & $81.20$                                                    & $1.55$/$17.25$                                                     & $1.270$                                                  \\ \hdashline
$5,000$                                                                   & \multicolumn{1}{c:}{EP} & $87.05$                                                    & $3.15$/$9.80$                                                      & \multicolumn{1}{c:}{$0.476$}                             & $88.30$                                                    & $2.30$/$9.40$                                                      & $0.632$                                                  \\
                                                                          & \multicolumn{1}{c:}{RT} & $87.35$                                                    & $2.55$/$10.10$                                                     & \multicolumn{1}{c:}{$0.476$}                             & $88.00$                                                    & $2.50$/$9.50$                                                      & $0.632$                                                  \\
                                                                          & \multicolumn{1}{c:}{SY} & $87.30$                                                    & $2.85$/$9.85$                                                      & \multicolumn{1}{c:}{$0.477$}                             & $88.25$                                                    & $2.40$/$9.35$                                                      & $0.632$                                                  \\ \hdashline
$10,000$                                                                  & \multicolumn{1}{c:}{EP} & $88.45$                                                    & $3.70$/$7.85$                                                      & \multicolumn{1}{c:}{$0.347$}                             & $89.10$                                                    & $3.00$/$7.90$                                                      & $0.458$                                                  \\
\multicolumn{1}{c}{}                                                      & \multicolumn{1}{c:}{RT} & $88.40$                                                    & $3.75$/$7.85$                                                      & \multicolumn{1}{c:}{$0.347$}                             & $88.45$                                                    & $3.40$/$8.15$                                                      & $0.458$                                                  \\
\multicolumn{1}{c}{}                                                      & \multicolumn{1}{c:}{SY} & $88.65$                                                    & $3.60$/$7.75$                                                      & \multicolumn{1}{c:}{$0.347$}                             & $88.55$                                                    & $3.20$/$8.25$                                                      & $0.458$                                                  \\ \hline \hline
\end{tabular}}
\caption{The table reports distinct features of the fixed-design bootstrap confidence intervals for the conditional ES at \textbf{level} $\bm{\alpha=0.01}$ with \textbf{nominal coverage} $\bm{1-\gamma=90\%}$. For each interval type and different sample sizes ($n$), the interval's average coverage rates (in $\%$), the average rate of the conditional ES being below/above the interval (in $\%$) and the interval's average length are tabulated. The intervals are based on $B=2{,}000$ bootstrap replications and the averages are computed using $S=2{,}000$ simulations. The DGP is a GARCH(1,1) with (normalized) \textbf{Student-$t$ innovations ($\bm{6}$ degrees of freedom)}.}
\label{tab:5.3}
\end{table}

Table \ref{tab:5.3} provides simulation results for the conditional ES at level $\alpha = 0.01$, where the DGP is a GARCH($1,1$) with Student-$t$ innovations (6 degrees of freedom). Unsurprisingly, we find that the average length of all intervals is considerably larger compared to the benchmark. More strikingly, we observe that the phenomenon of under-coverage appears across sample sizes. For the lowest sample size considered, i.e. $n=500$, average coverage rates are between $15pp$ and $20pp$ below nominal value. This problem is still severe for the case $n=1{,}000$, as rates are still approximately $10pp$ too low. Results are more satisfactory for the two highest sample sizes. An explanation for this result can be found in \citeauthor{gao2008estimation} (\citeyear{gao2008estimation}, Remark 3.3) who assert that the effective sample size for the estimation of ES is solely $n\alpha$.
%, \textcolor{red}{but we note that the average length of all intervals is considerably larger compared to the benchmark.} Likewise, the average rate of the conditional ES being below the interval is generally lower compared to case when $\alpha=0.05$. 
%The difference is even larger when we consider the average rate being above the interval; percentages are far above the ones found in the benchmark. 
All in all, we conclude that larger sample sizes are needed to obtain acceptable coverage probabilities.

\begin{table}[h]
\centering
\resizebox{\textwidth}{!}{\begin{tabular}{rccccccc}
\hline \hline
\multicolumn{1}{c}{\begin{tabular}[c]{@{}c@{}}Sample\\ size\end{tabular}} &                         & \begin{tabular}[c]{@{}c@{}}Average\\ coverage\end{tabular} & \begin{tabular}[c]{@{}c@{}}Av. coverage\\ below/above\end{tabular} & \begin{tabular}[c]{@{}c@{}}Average\\ length\end{tabular} & \begin{tabular}[c]{@{}c@{}}Average\\ coverage\end{tabular} & \begin{tabular}[c]{@{}c@{}}Av. coverage\\ below/above\end{tabular} & \begin{tabular}[c]{@{}c@{}}Average\\ length\end{tabular} \\ \hline
\multicolumn{1}{c}{}                                                      &                         &                                                            &                                                                    &                                                          &                                                            &                                                                    &                                                          \\
\multicolumn{1}{c}{}                                                      &                         & \multicolumn{3}{c}{low persistence}                                                                                                                                                        & \multicolumn{3}{c}{high persistence}                                                                                                                                                       \\
$500$                                                                     & \multicolumn{1}{c:}{EP} & $89.40$                                                    & $1.85$/$8.75$                                                      & \multicolumn{1}{c:}{$0.732$}                             & $89.65$                                                    & $2.10$/$8.25$                                                      & $0.984$                                                  \\
                                                                          & \multicolumn{1}{c:}{RT} & $90.65$                                                    & $1.00$/$8.35$                                                      & \multicolumn{1}{c:}{$0.732$}                             & $90.85$                                                    & $1.00$/$8.15$                                                      & $0.984$                                                  \\
                                                                          & \multicolumn{1}{c:}{SY} & $90.50$                                                    & $1.15$/$8.35$                                                      & \multicolumn{1}{c:}{$0.745$}                             & $91.10$                                                    & $1.30$/$7.60$                                                      & $0.997$                                                  \\ \hdashline
$1,000$                                                                   & \multicolumn{1}{c:}{EP} & $92.80$                                                    & $1.45$/$5.75$                                                      & \multicolumn{1}{c:}{$0.534$}                             & $93.05$                                                    & $1.15$/$5.80$                                                      & $0.727$                                                  \\
                                                                          & \multicolumn{1}{c:}{RT} & $93.50$                                                    & $1.15$/$5.35$                                                      & \multicolumn{1}{c:}{$0.534$}                             & $93.35$                                                    & $1.20$/$5.45$                                                      & $0.727$                                                  \\
                                                                          & \multicolumn{1}{c:}{SY} & $93.60$                                                    & $1.00$/$5.40$                                                      & \multicolumn{1}{c:}{$0.540$}                             & $93.40$                                                    & $1.15$/$5.45$                                                      & $0.730$                                                  \\ \hdashline
$5,000$                                                                   & \multicolumn{1}{c:}{EP} & $93.80$                                                    & $2.25$/$3.95$                                                      & \multicolumn{1}{c:}{$0.257$}                             & $94.35$                                                    & $1.60$/$4.05$                                                      & $0.342$                                                  \\
                                                                          & \multicolumn{1}{c:}{RT} & $94.45$                                                    & $2.05$/$3.50$                                                      & \multicolumn{1}{c:}{$0.257$}                             & $94.95$                                                    & $1.50$/$3.55$                                                      & $0.342$                                                  \\
                                                                          & \multicolumn{1}{c:}{SY} & $94.20$                                                    & $2.10$/$3.70$                                                      & \multicolumn{1}{c:}{$0.257$}                             & $94.60$                                                    & $1.55$/$3.85$                                                      & $0.342$                                                  \\ \hdashline
$10,000$                                                                  & \multicolumn{1}{c:}{EP} & $94.35$                                                    & $2.45$/$3.20$                                                      & \multicolumn{1}{c:}{$0.185$}                             & $94.25$                                                    & $1.85$/$3.90$                                                      & $0.243$                                                  \\
\multicolumn{1}{c}{}                                                      & \multicolumn{1}{c:}{RT} & $95.25$                                                    & $2.35$/$2.40$                                                      & \multicolumn{1}{c:}{$0.185$}                             & $94.50$                                                    & $2.05$/$3.45$                                                      & $0.243$                                                  \\
\multicolumn{1}{c}{}                                                      & \multicolumn{1}{c:}{SY} & $94.90$                                                    & $2.50$/$2.60$                                                      & \multicolumn{1}{c:}{$0.185$}                             & $94.70$                                                    & $1.85$/$3.45$                                                      & $0.243$                                                  \\ \hline \hline
\end{tabular}}
\caption{The table reports distinct features of the fixed-design bootstrap confidence intervals for the conditional ES at \textbf{level} $\bm{\alpha=0.05}$ with \textbf{nominal coverage} $\bm{1-\gamma=95\%}$. For each interval type and different sample sizes ($n$), the interval's average coverage rates (in $\%$), the average rate of the conditional ES being below/above the interval (in $\%$) and the interval's average length are tabulated. The intervals are based on $B=2{,}000$ bootstrap replications and the averages are computed using $S=2{,}000$ simulations. The DGP is a GARCH(1,1) with (normalized) \textbf{Student-$t$ innovations ($\bm{6}$ degrees of freedom)}.}
\label{tab:5.4}
\end{table}

Table \ref{tab:5.4} considers an increase in the interval's nominal value from $90\%$ to $95\%$. Once again we conclude that the results are qualitatively similar to the benchmark. 
%However, we observe that the average rates of the conditional ES being below and above the interval are notably smaller.
%In addition, t
The average lengths of the intervals are larger for every sample size considered. These results are to be expected for bootstrap intervals with higher nominal value.

It might be of interest to compare the results for the conditional ES with those reported in \cite{beutner2018residual} for the conditional VaR. They find that the EP interval performs worse than the RT interval in small samples, which is in line with the theoretical findings in \cite{falk1991coverage}. This result does not carry over to the conditional ES, since in most instances (except Table \ref{tab:5.4}) the EP interval even outperforms the RT interval. To make a full comparison, we also computed the results where the DGP is a T-GARCH($1,1$). Results are vastly comparable and available upon request.

%It might be of interest to compare the results for the conditional ES with those reported in \cite{beutner2018residual} for the conditional VaR. Firstly, they find that the EP interval performs worse than the other two interval types in small samples, which is motivated by theoretical findings in \cite{falk1991coverage}.
%These results cannot be extended to the conditional ES, since in most instances (except Table \ref{tab:5.4}) the EP interval even outperforms the RT and SY intervals. Secondly, we observe that the ES coverage is generally lower than the VaR coverage. One explanation for this can be found in \citeauthor{gao2008estimation} (\citeyear{gao2008estimation}, Remark 3.3) who explain that the effective sample size for the estimation of ES is solely $n\alpha$.
%Since commonly used values for $\alpha$ are usually small (i.e. $5\%$ or even $1\%$), the ES estimator is less accurate for moderate sample sizes. Another possible explanation can be found by considering the relation between ES and VaR: $ES_{\alpha} = \frac{1}{\alpha} \int_{0}^{\alpha} VaR_{\gamma} d\gamma$. Thus, the ES can be interpreted as an average over the $\alpha$-quantile,  to which very extreme quantiles contribute considerably. 
%Thirdly, to make a full comparison, we also computed the results where the DGP is a T-GARCH($1,1$). Results are vastly comparable and available upon request.

To summarize, the simulation study suggests that the fixed-design bootstrap works well in terms of average coverage. In comparison to the conditional VaR, higher sample sizes are necessary to obtain coverage rates close to the nominal value. There is no clear evidence to have a preference for any of the three intervals based on the simulation results in all different settings. This directly contrasts results for the conditional VaR in \cite{beutner2018residual} for which the RT bootstrap interval is found superior.

\section{Conclusion}
\label{sec:5.6}

This paper studies the two-step estimation procedure of \cite{francq2015risk} associated with the conditional ES. In the first step, the conditional volatility parameters are estimated by QMLE, while the second step corresponds to the estimation of conditional ES based on the first-step residuals. We find that the estimators of the parameters that comprise the conditional ES have a joint asymptotic distribution that does not depend on any density. This is in direct contrast with the conditional VaR estimator for which density estimation is required. A fixed-design residual bootstrap method is proposed to mimic the finite sample distribution of the two-step estimator and its consistency is proven under mild assumptions. In addition, an algorithm is provided for the construction of bootstrap intervals for the conditional ES to take into account the uncertainty induced by estimation. Three interval types are suggested and a simulation study is conducted to investigate their performance in finite samples. Firstly, we find that average coverage rates of all intervals are close to nominal value, except when sample size is low. Secondly, we find that there is no clear evidence that any of the proposed intervals outperforms the others. This contrasts the results in \cite{beutner2018residual} for the conditional VaR, who find superiority of the reversed-tails bootstrap interval.

The present work can be extended by developing a bootstrap procedure for the one-step approach estimator of \cite{francq2015risk}. This suggestion is left for future research.

\newpage

\appendix

\section{Appendix}
\subsection{Auxiliary Results and Proofs}
\label{app:5.A}

\begin{lemma}
\label{lem:5.1}
Suppose Assumptions \ref{as:5.1}--\ref{as:5.10} hold with $a=\pm 12$, $b=12$ and $c=6$. Then, we have $A_n^*\overset{p^*}{\to}0$ in probability.
\end{lemma}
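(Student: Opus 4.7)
My plan is to bound $A_n^*$ by exploiting two facts: its summand vanishes outside a thin random set of indices where the two indicators disagree, and on that set both the summand and the count of indices have a controllable order. The skeleton follows \citeauthor{chen2007nonparametric} (\citeyear{chen2007nonparametric}, Lemma 2), adapted to the fixed-design bootstrap environment.

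First, I would linearize $\hat{\eta}_t^*$ around $\eta_t^*$. Since $\hat{\eta}_t^* = \eta_t^*\tilde{\sigma}_t(\hat{\theta}_n)/\tilde{\sigma}_t(\hat{\theta}_n^*)$, a mean-value expansion of $\log\tilde{\sigma}_t$ between $\hat{\theta}_n$ and $\hat{\theta}_n^*$ yields
$$\hat{\eta}_t^* - \eta_t^* = -\eta_t^*\,\tilde{D}_t(\bar{\theta}_t^*)^\top\big(\hat{\theta}_n^* - \hat{\theta}_n\big)$$
for some intermediate $\bar{\theta}_t^*$. The bootstrap CLT in \citeauthor{beutner2018residual} (\citeyear{beutner2018residual}, Proposition 1) gives $\sqrt{n}(\hat{\theta}_n^* - \hat{\theta}_n) = O_{p^*}(1)$ in probability, while Assumption \ref{as:5.9} with $b=12$ and Assumption \ref{as:5.5}(\ref{as:5.5.3}) supply sufficient bootstrap moments of $\eta_t^*\tilde{D}_t(\bar{\theta}_t^*)$. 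Analogously, a Bahadur-type representation for the bootstrap sample quantile yields $\hat{\xi}_{n,\alpha}^* - \hat{\xi}_{n,\alpha} = O_{p^*}(n^{-1/2})$ in probability, using continuity and strict positivity of $f$ around $\xi_\alpha$ from Assumption \ref{as:5.5}(\ref{as:5.5.2}).

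Second, I would note that the $t$-th summand of $A_n^*$ is non-zero only when the two indicators disagree, and on that event $\hat{\eta}_t^*$ lies in the closed interval determined by $\hat{\xi}_{n,\alpha}$ and $\hat{\xi}_{n,\alpha}^*$ up to the linearization error, so
$$\big|\hat{\eta}_t^* - \hat{\xi}_{n,\alpha}\big| \leq \big|\hat{\xi}_{n,\alpha}^* - \hat{\xi}_{n,\alpha}\big| + \big|\hat{\eta}_t^* - \eta_t^*\big| =: \delta_{n,t}^*,$$
and $\delta_{n,t}^*$ is of order $n^{-1/2}$ times a quantity with a finite bootstrap moment. At the same time, the disagreement set is contained in $\{t:\eta_t^*\in(\hat{\xi}_{n,\alpha}-\Delta_n^*,\hat{\xi}_{n,\alpha}+\Delta_n^*)\}$ with $\Delta_n^*=O_{p^*}(n^{-1/2})$; its cardinality has bootstrap expectation $O(\sqrt{n})$ because $\hat{\mathbbm{F}}_n\to F$ uniformly (Glivenko--Cantelli applied to the bootstrap sampling distribution) and $f$ is continuous and strictly positive near $\xi_\alpha$. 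Combining, a pointwise bound delivers
$$|A_n^*|\leq \tfrac{1}{\sqrt{n}}\cdot O_{p^*}(\sqrt{n})\cdot O_{p^*}(n^{-1/2}) = O_{p^*}(n^{-1/2}) = o_{p^*}(1)\qquad\text{in probability}.$$

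The main obstacle will be the first step: obtaining \emph{uniform-in-$t$} control of the linearization error $\hat{\eta}_t^*-\eta_t^*$ and carefully tracking the two probability modes (unconditional vs. bootstrap-conditional). This is precisely where the stringent moment orders $a=\pm 12$, $b=12$, $c=6$ in Assumption \ref{as:5.9} pay off: they yield the envelopes required for a maximal-type inequality on $\eta_t^*\tilde{D}_t(\bar{\theta}_t^*)$ and, together with Assumption \ref{as:5.4}(\ref{as:5.4.2}), let me absorb the remainder in the Taylor expansion. The bootstrap conditional statements are then upgraded to unconditional ones via Markov's inequality applied to bootstrap second moments, a standard device in this literature.
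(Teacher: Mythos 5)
There is a genuine gap in the middle step, and it is exactly the step you flag as the ``main obstacle''. Your bound requires a \emph{uniform-in-$t$} envelope $\Delta_n^*=O_{p^*}(n^{-1/2})$ for $|\hat{\eta}_t^*-\eta_t^*|+|\hat{\xi}_{n,\alpha}^*-\hat{\xi}_{n,\alpha}|$, but no such rate is available under the stated assumptions: from the expansion $\hat{\eta}_t^*-\eta_t^*\approx -\eta_t^*\hat{D}_t'(\hat{\theta}_n^*-\hat{\theta}_n)$ the best uniform control is of order $n^{-1/2}\max_{t\le n}|\eta_t^*|\,\max_{t\le n}\|\hat{D}_t\|$, and with only $\EE[\eta_t^4]<\infty$ (Assumption \ref{as:5.5}(\ref{as:5.5.3}), giving $\max_t|\eta_t^*|=O_{p^*}(n^{1/4})$) and twelfth moments on $D_t$ (giving $\max_t\|\hat{D}_t\|$ of order $n^{1/12}$), the envelope is roughly $O_{p^*}(n^{-1/6})$, not $O_{p^*}(n^{-1/2})$. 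Plugging that into your count-times-magnitude argument gives a band of width $n^{-1/6}$, hence $O_{p^*}(n^{5/6})$ disagreeing indices and summands of size $n^{-1/6}$, so the bound on $|A_n^*|$ is of order $n^{-1/2}\cdot n^{5/6}\cdot n^{-1/6}=n^{1/6}$, which diverges. No maximal inequality rescues a uniform $n^{-1/2}$ rate here; this is precisely why the paper does not argue via a single envelope. Instead it splits each summand according to whether $|\hat{\eta}_t^*-\eta_t^*|+|\hat{\xi}_{n,\alpha}^*-\hat{\xi}_{n,\alpha}|$ is below or above an intermediate threshold $n^{-\delta}$ with $\delta\in(1/4,1/2)$: on the ``small'' event the summand is $\lesssim n^{-\delta}$ and the number of relevant indices (those with $|\eta_t^*-\hat{\xi}_{n,\alpha}|\le 2n^{-\delta}$) has conditional mean $O_p(n^{1-\delta})$, giving a contribution $O_p(n^{1/2-2\delta})\to 0$ since $\delta>1/4$; on the ``large'' event the frequency of exceedances is controlled by Markov's inequality at rate $n^{\delta}\cdot n^{-1/2}\to 0$ since $\delta<1/2$, combined with the second-order Taylor expansion and the $\sqrt{n}$-rates for $\hat{\theta}_n^*-\hat{\theta}_n$ and $\hat{\xi}_{n,\alpha}^*-\hat{\xi}_{n,\alpha}$.

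A secondary, smaller issue: your counting step invokes Glivenko--Cantelli for $\hat{\mathbbm{F}}_n$, but uniform convergence alone gives only an $o(1)$ approximation error, which is useless at the scale $n^{-\delta}$ (let alone $n^{-1/2}$) at which you need to evaluate increments of $\hat{\mathbbm{F}}_n$ near $\hat{\xi}_{n,\alpha}$. The paper instead uses the local expansion of $\hat{\mathbbm{F}}_n$ in terms of the empirical cdf of the true innovations plus a drift term of size $O_p(n^{-1/2})$ (Steps 5 and 6 of the proof of Lemma 3 in \citeauthor{beutner2018residual}, \citeyear{beutner2018residual}) together with the mean value theorem and continuity of $f$ near $\xi_\alpha$ to show that the band probability is $4n^{-\delta}f(\xi_\alpha)+o_p(n^{-\delta})+o_p(n^{-1/2})$. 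Your overall skeleton (indicator-disagreement band plus counting, in the spirit of \citeauthor{chen2007nonparametric}) is the right one, but without the $n^{-\delta}$ truncation device and the sharper empirical-process expansion the argument as written does not close.
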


\begin{proof}
The proof is inspired by \citeauthor{chen2007nonparametric} (\citeyear{chen2007nonparametric}, proof of Lemma 2). Take $\delta \in (1/4,1/2)$, and expand 
\begin{align*}
A_n^* =& \frac{1}{\sqrt{n}}\sum_{t=1}^n\big(\hat{\eta}_t^* -\hat{\xi}_{n,\alpha}\big)\big(\mathbbm{1}_{\{\hat{\eta}_t^*< \hat{\xi}_{n,\alpha}^*\}}-\mathbbm{1}_{\{\eta_t^*< \hat{\xi}_{n,\alpha}\}}\big)\\
%%%
=& \frac{1}{\sqrt{n}}\sum_{t=1}^n\big(\hat{\eta}_t^* -\hat{\xi}_{n,\alpha}\big)\Big(\mathbbm{1}_{\{\hat{\xi}_{n,\alpha}\leq \eta_t^* \}}\mathbbm{1}_{\{\hat{\eta}_t^*<\hat{\xi}_{n,\alpha}^* \}}-\mathbbm{1}_{\{\hat{\xi}_{n,\alpha}^*\leq \hat{\eta}_t^*\}}\mathbbm{1}_{\{\eta_t^*<\hat{\xi}_{n,\alpha} \}}\Big)\\ 
%%%
=& \underbrace{\frac{1}{\sqrt{n}}\sum_{t=1}^n\big(\hat{\eta}_t^* -\hat{\xi}_{n,\alpha}\big)\Big(\mathbbm{1}_{\{\hat{\xi}_{n,\alpha}\leq \eta_t^* \}}\mathbbm{1}_{\{\hat{\eta}_t^*<\hat{\xi}_{n,\alpha}^* \}}-\mathbbm{1}_{\{\hat{\xi}_{n,\alpha}^*\leq \hat{\eta}_t^*\}}\mathbbm{1}_{\{\eta_t^*<\hat{\xi}_{n,\alpha} \}}\Big)\mathbbm{1}_{\{ |\hat{\eta}_t^*-\eta_t^*|+|\hat{\xi}_{n,\alpha}^*-\hat{\xi}_{n,\alpha}|< n^{-\delta}\}}}_{I}\\
&+\underbrace{\frac{1}{\sqrt{n}}\sum_{t=1}^n\big(\hat{\eta}_t^* -\hat{\xi}_{n,\alpha}\big)\Big(\mathbbm{1}_{\{\hat{\xi}_{n,\alpha}\leq \eta_t^* \}}\mathbbm{1}_{\{\hat{\eta}_t^*<\hat{\xi}_{n,\alpha}^* \}}-\mathbbm{1}_{\{\hat{\xi}_{n,\alpha}^*\leq \hat{\eta}_t^*\}}\mathbbm{1}_{\{\eta_t^*<\hat{\xi}_{n,\alpha} \}}\Big)\mathbbm{1}_{\{ |\hat{\eta}_t^*-\eta_t^*|+|\hat{\xi}_{n,\alpha}^*-\hat{\xi}_{n,\alpha}|\geq n^{-\delta}\}}.}_{II}
\end{align*}
The first term can be bounded by
\begin{align*}
|I| &\leq\! \frac{1}{\sqrt{n}}\sum_{t=1}^n\big|\hat{\eta}_t^* -\hat{\xi}_{n,\alpha}\big|\Big(\mathbbm{1}_{\{\hat{\xi}_{n,\alpha}\leq \eta_t^* \}}\mathbbm{1}_{\{\hat{\eta}_t^*<\hat{\xi}_{n,\alpha}^* \}}+\mathbbm{1}_{\{\hat{\xi}_{n,\alpha}^*\leq \hat{\eta}_t^*\}}\mathbbm{1}_{\{\eta_t^*<\hat{\xi}_{n,\alpha} \}}\Big)\mathbbm{1}_{\{ |\hat{\eta}_t^*-\eta_t^*|+|\hat{\xi}_{n,\alpha}^*-\hat{\xi}_{n,\alpha}|< n^{-\delta}\}}\\
&\leq \frac{1}{\sqrt{n}}\sum_{t=1}^n\big(n^{-\delta}+|\eta_t^* -\hat{\xi}_{n,\alpha}|\big)\Big(\underbrace{\mathbbm{1}_{\{\hat{\xi}_{n,\alpha} \leq \eta_t^*\}} \mathbbm{1}_{\{\eta_t^*-n^{-\delta}<\hat{\xi}_{n,\alpha}+n^{-\delta} \}}}_{=\mathbbm{1}_{\{0\leq \eta_t^*-\hat{\xi}_{n,\alpha}< 2n^{-\delta} \}}}+\underbrace{\mathbbm{1}_{\{\hat{\xi}_{n,\alpha}-n^{-\delta} \leq \eta_t^*+n^{-\delta}\}}\mathbbm{1}_{\{\eta_t^*<\hat{\xi}_{n,\alpha} \}}}_{=\mathbbm{1}_{\{-2 n^{-\delta}\leq\eta_t^*-\hat{\xi}_{n,\alpha}<0 \}}}\Big)\\
&\leq \frac{2}{\sqrt{n}}\sum_{t=1}^n\big(n^{-\delta}+|\eta_t^* -\hat{\xi}_{n,\alpha}|\big)\mathbbm{1}_{\{|\eta_t^*-\hat{\xi}_{n,\alpha}|\leq 2 n^{-\delta} \}}  \leq   \underbrace{\frac{6}{\sqrt{n}n^{\delta}}\sum_{t=1}^n\mathbbm{1}_{\{|\eta_t^*-\hat{\xi}_{n,\alpha}|\leq 2 n^{-\delta} \}}}_{W_n^*}
\end{align*}
Note that, given the original sample, the random variables $X_{t,n}^* = \mathbbm{1}_{\{ |\eta_t^*-\hat{\xi}_{n,\alpha}|< 2 n^{-\delta}\}}$ are \textit{i.i.d.} The conditional mean satisfies $\EE^*[X_{t,n}^*]=O_p(n^{-\delta})$; to appreciate why, we have
\begin{align*}
\EE^*[X_{t,n}^*] &\leq \PP^*\big[\hat{\xi}_{n,\alpha}-2n^{-\delta} <\eta_t^*\leq  \hat{\xi}_{n,\alpha}+2 n^{-\delta}\big]\\
%%%
&= \hat{\mathbbm{F}}_n\big(\hat{\xi}_{n,\alpha}+2n^{-\delta}\big) - \hat{\mathbbm{F}}_n\big(\hat{\xi}_{n,\alpha}-2n^{-\delta}\big)\\
%%%
&= \bigg(\mathbbm{F}_n\big(\hat{\xi}_{n,\alpha}+ 2n^{-\delta}\big)+\big(\hat{\xi}_{n,\alpha}+ 2n^{-\delta}\big)f\big(\hat{\xi}_{n,\alpha}+ 2n^{-\delta}\big)\Omega'\big(\hat{\theta}_n-\theta_0\big)+o_p(n^{-1/2})\bigg)\\
& \ \ \ - \bigg(\mathbbm{F}_n\big(\hat{\xi}_{n,\alpha}- 2n^{-\delta}\big)+\big(\hat{\xi}_{n,\alpha}- 2n^{-\delta}\big)f\big(\hat{\xi}_{n,\alpha}+ 2n^{-\delta}\big)\Omega'\big(\hat{\theta}_n-\theta_0\big)+o_p(n^{-1/2})\bigg)\\
%%%
&= \mathbbm{F}_n\big(\hat{\xi}_{n,\alpha}+ 2n^{-\delta}\big)- \mathbbm{F}_n\big(\hat{\xi}_{n,\alpha}- 2n^{-\delta}\big)+o_p(n^{-1/2})\\
%%%
&= F\big(\hat{\xi}_{n,\alpha}+ 2n^{-\delta}\big)-F\big(\hat{\xi}_{n,\alpha}- 2n^{-\delta}\big)+o_p(n^{-1/2})\\
%%%
&= 4n^{-\delta}f\big(\hat{\xi}_{n,\alpha}+ b_n\big)+o_p(n^{-1/2})=4n^{-\delta}f(\xi_\alpha)+o_p(n^{-\delta})+o_p(n^{-1/2})
\end{align*}
where $b_n \in (-2n^{-\delta},2n^{-\delta})$. The second equality follows from  \citeauthor{beutner2018residual} (\citeyear{beutner2018residual}, \textit{Step 5} of the proof of Lemma 3) whereas the third equality is implied by $\hat{\xi}_{n,\alpha}+2n^{-\delta}\overset{a.s}{\to}\xi_\alpha$, continuity of $f$ in a neighborhood of $\xi_\alpha$ and $\sqrt{n}\big(\hat{\theta}_n-\theta_0\big)=O_p(1)$. The fourth equality is due to \textit{Step 6} in the proof of \citeauthor{beutner2018residual} (\citeyear{beutner2018residual}, Lemma 3). The mean value theorem is applied to obtain the fifth equality and the last equality is due to continuity of $f$ in a neighborhood of $\xi_\alpha$ and $\hat{\xi}_{n,\alpha}+b_n\overset{a.s}{\to}\xi_\alpha$. Thus, we have
\begin{align*}
    \EE^*\big[W_n^*\big] =& \frac{6}{\sqrt{n}n^\delta}\sum_{t=1}^n \EE^*\big[X_{t,n}^*\big]=O_p(\sqrt{n}/n^{2\delta})
    = O_p(n^{-\delta}) \\
    \Var^*\big[W_n^*\big] =& \frac{36}{n^{1+2\delta}}\sum_{t=1}^n \EE^*\big[X_{t,n}^*\big]\Big(\underbrace{1-\EE^*\big[X_{t,n}^*\big]}_{\leq 1}\Big)=O_p(n^{-3\delta})
\end{align*}
implying $W_n^*\overset{p^*}{\to}0$ in probability and we conclude that $I\overset{p^*}{\to}0$ in probability. Regarding the second term, we write $\hat{\mathbbm{1}}_{n,t}^*=\mathbbm{1}_{\{ |\hat{\eta}_t^*-\eta_t^*|+|\hat{\xi}_{n,\alpha}^*-\hat{\xi}_{n,\alpha}|\geq n^{-\delta}\}}$ and establish the following bound
\begin{align*}
|II| &\leq  \frac{1}{\sqrt{n}}\sum_{t=1}^n\big|\hat{\eta}_t^* -\hat{\xi}_{n,\alpha}\big|\Big(\mathbbm{1}_{\{\hat{\xi}_{n,\alpha}\leq \eta_t^* \}}\mathbbm{1}_{\{\hat{\eta}_t^*<\hat{\xi}_{n,\alpha}^* \}}+\mathbbm{1}_{\{\hat{\xi}_{n,\alpha}^*\leq \hat{\eta}_t^*\}}\mathbbm{1}_{\{\eta_t^*<\hat{\xi}_{n,\alpha} \}}\Big)\hat{\mathbbm{1}}_{n,t}^*\\
%%%
&= \frac{1}{\sqrt{n}}\sum_{t=1}^n\big|\hat{\eta}_t^*-\hat{\xi}_{n,\alpha}\big|\Big(\mathbbm{1}_{\{\hat{\xi}_{n,\alpha} \leq \eta_t^* \}}\mathbbm{1}_{\{\hat{\eta}_t^*<\hat{\xi}_{n,\alpha}^* \}}\mathbbm{1}_{\{\hat{\eta}_t^* < \hat{\xi}_{n,\alpha} \}}+\mathbbm{1}_{\{ \hat{\xi}_{n,\alpha}^* \leq \hat{\eta}_t^*\}}\mathbbm{1}_{\{\eta_t^*<\hat{\xi}_{n,\alpha} \}}\mathbbm{1}_{\{\hat{\xi}_{n,\alpha} \leq \hat{\eta}_t^*\}}\Big)\hat{\mathbbm{1}}_{n,t}^*\\
&\ \ \ +\frac{1}{\sqrt{n}}\sum_{t=1}^n\big|\hat{\eta}_t^* -\hat{\xi}_{n,\alpha}\big|\Big(\mathbbm{1}_{\{\hat{\xi}_{n,\alpha} \leq \eta_t^* \}}\mathbbm{1}_{\{\hat{\eta}_t^*<\hat{\xi}_{n,\alpha}^* \}}\mathbbm{1}_{\{\hat{\xi}_{n,\alpha} \leq \hat{\eta}_t^*  \}}+\mathbbm{1}_{\{\hat{\xi}_{n,\alpha}^* \leq \hat{\eta}_t^*\}}\mathbbm{1}_{\{\eta_t^*<\hat{\xi}_{n,\alpha} \}}\mathbbm{1}_{\{\hat{\eta}_t^* < \hat{\xi}_{n,\alpha} \}}\Big)\hat{\mathbbm{1}}_{n,t}^* \\
%%%
&\leq \frac{1}{\sqrt{n}}\sum_{t=1}^n2\big|\hat{\eta}_t^*-\eta_t^*\big|\hat{\mathbbm{1}}_{n,t}^*+\frac{1}{\sqrt{n}}\sum_{t=1}^n 2\big|\hat{\xi}_{n,\alpha}^* -\hat{\xi}_{n,\alpha}\big|\hat{\mathbbm{1}}_{n,t}^*.
\end{align*}
The Taylor expansion in \citeauthor{beutner2018residual} (\citeyear{beutner2018residual}, Equation A.69) and $\hat{\eta}_t^*=\frac{\tilde{\sigma}_t(\hat{\theta}_n)}{\tilde{\sigma}_t(\hat{\theta}_n^*)}\eta_t^*$ gives
\begin{align}
\label{eq:5.A.1}
\begin{split}
\eta_t^*-\hat{\eta}_t^* =& \hat{D}_t'\big(\hat{\theta}_n^*-\hat{\theta}_n\big)\eta_t^*\\
&+ \frac{1}{2}\big(\hat{\theta}_n^*-\hat{\theta}_n\big)'\frac{\tilde{\sigma}_t(\hat{\theta}_n)}{\tilde{\sigma}_t(\breve{\theta}_n)}\Big(\tilde{H}_t(\breve{\theta}_n)-2\tilde{D}_t(\breve{\theta}_n)\tilde{D}_t'(\breve{\theta}_n)\Big) \big(\hat{\theta}_n^*-\hat{\theta}_n\big)\eta_t^*
\end{split}
\end{align}
with $\breve{\theta}_n$ between $\hat{\theta}_n^*$ and $\hat{\theta}_n$. Employing the Cauchy-Schwarz inequality we obtain 
\begin{equation*}
\big|\hat{\eta}_t^*-\eta_t^*\big| = \big|\big|\hat{\theta}_n^*-\hat{\theta}_n\big|\big|\: \big|\big|\hat{D}_t\big|\big|\:|\eta_t^*|+\frac{1}{2}\big|\big|\hat{\theta}_n^*-\hat{\theta}_n\big|\big|^2\frac{\tilde{\sigma}_t(\hat{\theta}_n)}{\tilde{\sigma}_t(\breve{\theta}_n)}\Big(\big|\big|\tilde{H}_t(\breve{\theta}_n)\big|\big|+2\big|\big|\tilde{D}_t(\breve{\theta}_n)\big|\big|^2\Big)\: |\eta_t^*|.
\end{equation*}
Hence, $|II|$ can further be bounded by
\begin{align*}
|II| &\leq n\big|\big|\hat{\theta}_n^*-\hat{\theta}_n\big|\big|^2\underbrace{\frac{1}{\sqrt{n}n}\sum_{t=1}^n \frac{\tilde{\sigma}_t(\hat{\theta}_n)}{\tilde{\sigma}_t(\breve{\theta}_n)}\Big(\big|\big|\tilde{H}_t(\breve{\theta}_n)\big|\big|+2\big|\big|\tilde{D}_t(\breve{\theta}_n)\big|\big|^2\Big) |\eta_t^*|}_{II_1}\\
& \ \ \ \ + 2\sqrt{n}\big|\hat{\xi}_{n,\alpha}^* -\hat{\xi}_{n,\alpha}\big|\underbrace{\frac{1}{n}\sum_{t=1}^n \hat{\mathbbm{1}}_{n,t}^*}_{II_2} + 2 \sqrt{n}\big|\big|\hat{\theta}_n^*-\hat{\theta}_n\big|\big|\underbrace{\frac{1}{n}\sum_{t=1}^n \big|\big|\hat{D}_t\big|\big|\:|\eta_t^*| \hat{\mathbbm{1}}_{n,t}^*}_{II_3}.
\end{align*}
Since $\sqrt{n}\big(\hat{\theta}_n^*-\hat{\theta}_n\big)\overset{d^*}{\to}N\big(0,\frac{\kappa-1}{4}J^{-1}\big)$ almost surely and  $\sqrt{n}\big(\hat{\xi}_{n,\alpha}^* -\hat{\xi}_{n,\alpha}\big)\overset{d^*}{\to}N(0,\zeta_\alpha)$ in probability for some $\zeta_\alpha$ (see \citeauthor{beutner2018residual}, \citeyear{beutner2018residual}, Proposition 1 and Theorem 3), it remains to show that the under-braced terms converge in conditional probability to zero in probability. Consider $II_1$; for every $\varepsilon>0$ we obtain 
\begin{align} \label{eq:5.A.2}
&\PP^*\big[II_1\geq \varepsilon \big] \leq  \PP^*\Big[II_1\geq \varepsilon \cap \breve{\theta}_n \in \mathscr{V}(\theta_0)\Big] + \PP^*\Big[\breve{\theta}_n \notin \mathscr{V}(\theta_0)\Big] \notag \\
&\leq \PP^*\Bigg[\frac{1}{\sqrt{n}n}\sum_{t=1}^n\underbrace{\sup_{\theta \in \mathscr{V}(\theta_0)}\frac{\tilde{\sigma}_t(\hat{\theta}_n)}{\tilde{\sigma}_t(\theta)}\bigg(\sup_{\theta \in \mathscr{V}(\theta_0)}\big|\big|\tilde{H}_t(\theta)\big|\big|+2\sup_{\theta \in \mathscr{V}(\theta_0)}\big|\big|\tilde{D}_t(\theta)\big|\big|^2\bigg)}_{Q_{t,n}} |\eta_t^*|\geq \varepsilon\Bigg] \notag \\
& \ \ \ + \PP^*\Big[\breve{\theta}_n \notin \mathscr{V}(\theta_0)\Big] \\
\nonumber
& \leq \frac{1}{\varepsilon} \EE^*\bigg[\frac{1}{\sqrt{n}n}\sum_{t=1}^n Q_{t,n} |\eta_t^*|\bigg] + \PP^*\Big[\breve{\theta}_n \notin \mathscr{V}(\theta_0)\Big] 
= \frac{\EE^*\big[|\eta_t^*|\big]}{\varepsilon \sqrt{n}n}\sum_{t=1}^n Q_{t,n} + \PP^*\Big[\breve{\theta}_n \notin \mathscr{V}(\theta_0)\Big] \notag,
\end{align}
where the last inequality is due to Markov. As $\EE^*\big[|\eta_t^*|\big]\leq \big(\EE^*\big[\eta_t^{*2}\big]\big)^{1/2}\overset{a.s.}{\to}\big(\EE\big[\eta_t^{2}\big]\big)^{1/2}=1$, $\frac{1}{n}\sum_{t=1}^n Q_{t,n}=O_p(1)$ and $\PP^*\big[\breve{\theta}_n \notin \mathscr{V}(\theta_0)\big]\overset{a.s.}{\to}0$ by \citeauthor{beutner2018residual} (\citeyear{beutner2018residual}, Lemma 4 and proof of Lemma 10), we have $\PP^*\big[II_1\geq \varepsilon \big]\overset{p}{\to}0$ and we conclude that $II_1\overset{p^*}{\to}0$ in probability. Next, we focus on $II_2$. For every $\varepsilon>0$ we obtain 
{\allowdisplaybreaks
\begin{align*}
&\PP^*\big[II_2\geq \varepsilon\big]
\leq \PP^*\bigg[\frac{1}{n}\sum_{t=1}^n \mathbbm{1}_{\{ |\hat{\eta}_t^*-\eta_t^*|+|\hat{\xi}_{n,\alpha}^*-\hat{\xi}_{n,\alpha}|\geq n^{-\delta}\}}\geq \varepsilon  \cap \breve{\theta}_n \in \mathscr{V}(\theta_0)\bigg]+ \PP^*\Big[\breve{\theta}_n \notin \mathscr{V}(\theta_0)\Big]\\
%%%
&\leq  \PP^*\bigg[\frac{1}{n}\sum_{t=1}^n \mathbbm{1}_{\{||\hat{\theta}_n^*-\hat{\theta}_n||\:||\hat{D}_t||\:|\eta_t^*|+\frac{1}{2}||\hat{\theta}_n^*-\hat{\theta}_n||^2Q_{n,t}\: |\eta_t^*|+|\hat{\xi}_{n,\alpha}^*-\hat{\xi}_{n,\alpha}|\geq n^{-\delta}\}}\geq \varepsilon  \bigg]+ \PP^*\Big[\breve{\theta}_n \notin \mathscr{V}(\theta_0)\Big] \\
%%%
&\leq  \PP^*\bigg[\frac{1}{n}\sum_{t=1}^n \mathbbm{1}_{\{\frac{C}{\sqrt{n}}\:||\hat{D}_t||\:|\eta_t^*|+\frac{C^2}{2n}Q_{n,t}\: |\eta_t^*|+\frac{C}{\sqrt{n}}\geq n^{-\delta}\}}\geq \varepsilon  \bigg] \\
&\quad +\PP^*\Big[ \sqrt{n}\big|\hat{\xi}_{n,\alpha}^* -\hat{\xi}_{n,\alpha}\big|> C \Big]+ \PP^*\Big[\sqrt{n}\big|\big|\hat{\theta}_n^*-\hat{\theta}_n\big|\big|> C \Big]+ \PP^*\Big[\breve{\theta}_n \notin \mathscr{V}(\theta_0)\Big]\\
%%%
&\leq  \frac{1}{\varepsilon n}\sum_{t=1}^n \PP^*\bigg[\frac{C}{\sqrt{n}}\:||\hat{D}_t||\:|\eta_t^*|+\frac{C^2}{2n}Q_{n,t}\: |\eta_t^*|+\frac{C}{\sqrt{n}}\geq n^{-\delta}  \bigg]\\
&\quad +\PP^*\Big[ \sqrt{n}\big|\hat{\xi}_{n,\alpha}^* -\hat{\xi}_{n,\alpha}\big|> C \Big]+ \PP^*\Big[\sqrt{n}\big|\big|\hat{\theta}_n^*-\hat{\theta}_n\big|\big|> C \Big]+ \PP^*\Big[\breve{\theta}_n \notin \mathscr{V}(\theta_0)\Big]\\
%%%
&\leq  \frac{n^\delta}{\varepsilon n }\sum_{t=1}^n \EE^*\bigg[\frac{C}{\sqrt{n}}\:||\hat{D}_t||\:|\eta_t^*|+\frac{C^2}{2n}Q_{n,t}\: |\eta_t^*|+\frac{C}{\sqrt{n}}\bigg]\\
&\quad +\PP^*\Big[ \sqrt{n}\big|\hat{\xi}_{n,\alpha}^* -\hat{\xi}_{n,\alpha}\big|> C \Big]+ \PP^*\Big[\sqrt{n}\big|\big|\hat{\theta}_n^*-\hat{\theta}_n\big|\big|> C \Big]+ \PP^*\Big[\breve{\theta}_n \notin \mathscr{V}(\theta_0)\Big]\\
%%%
&=  \underbrace{\frac{n^\delta}{\sqrt{n}} \frac{C \EE^*\big[|\eta_t^*|\big]}{\varepsilon}  \frac{1}{n}\sum_{t=1}^n ||\hat{D}_t||}_{II_{2,1}}+\underbrace{\frac{n^\delta}{n} \frac{C^2 \EE^*\big[|\eta_t^*|\big]}{2 \varepsilon}\frac{1}{n}\sum_{t=1}^nQ_{n,t}}_{II_{2,2}} +\underbrace{\frac{n^\delta}{\sqrt{n}}\frac{C}{\varepsilon}}_{II_{2,3}}\\
&\quad +\underbrace{\PP^*\Big[ \sqrt{n}\big|\hat{\xi}_{n,\alpha}^* -\hat{\xi}_{n,\alpha}\big|> C \Big]}_{II_{2,4}}+ \underbrace{\PP^*\Big[\sqrt{n}\big|\big|\hat{\theta}_n^*-\hat{\theta}_n\big|\big|> C \Big]}_{II_{2,5}}+ \underbrace{\PP^*\Big[\breve{\theta}_n \notin \mathscr{V}(\theta_0)\Big]}_{II_{2,6}}.
\end{align*}}
Previously, we have shown that $II_{2,6}\overset{a.s.}{\to}0$, whereas the $II_{2,4}$ and $II_{2,4}$ can be made arbitrarily small in probability by choosing $C$ sufficiently large. Given $C$, we find $II_{2,3}\to 0$ as $\frac{n^\delta}{\sqrt{n}}\to 0$. Recalling that $\lim_{n \to \infty}\EE^*\big[|\eta_t^*|\big]\leq 1$ almost surely and  $\sum_{t=1}^nQ_{n,t}=O_p(1)$ we find $II_{2,2}\overset{p}{\to}0$. The Cauchy-Schwarz inequality gives $\frac{1}{n} \sum_{t=1}^n ||\hat{D}_t||\leq \big(\frac{1}{n} \sum_{t=1}^n \big|\big|\hat{D}_t\big|\big|^4\big)^{1/4}$ and together with $\lim_{n \to \infty}\frac{1}{n} \sum_{t=1}^n \big|\big|\hat{D}_t\big|\big|^4<\infty$ almost surely (see \citeauthor{beutner2018residual}, \citeyear{beutner2018residual}, proof of Lemma 7), we get that $II_{2,1}\overset{a.s.}{\to}0$. Thus, $\PP^*\big[II_2\geq \varepsilon\big]\overset{p}{\to}0$ and we establish $II_2\overset{p^*}{\to}0$ in probability. Regarding $II_3$, H\"older's inequality implies
\begin{align*}
    II_3 \leq \bigg(\frac{1}{n}\sum_{t=1}^n \eta_t^{*2} \bigg)^{\frac{1}{2}}
    \bigg(\frac{1}{n}\sum_{t=1}^n \big|\big|\hat{D}_t\big|\big|^4\bigg)^{\frac{1}{4}}
    \bigg(\underbrace{\frac{1}{n}\sum_{t=1}^n  \hat{\mathbbm{1}}_{n,t}^*}_{=II_2}\bigg)^{\frac{1}{4}}.
\end{align*}
We have $\frac{1}{n}\sum_{t=1}^n |\eta_t^*|^2\overset{p^*}{\to}\EE[\eta_t^2]$ almost surely as
\begin{align*}
    \EE^*\bigg[\frac{1}{n}\sum_{t=1}^n \eta_t^{*2}\bigg] =   \EE^*\big[\eta_t^{*2}\big]\overset{a.s.}{\to}\EE[\eta_t^2] \quad \text{and} \quad 
    \Var^*\bigg[\frac{1}{n}\sum_{t=1}^n \eta_t^{*2}\bigg] =   \frac{1}{n}\Var^*\big[\eta_t^{*2}\big]\overset{a.s.}{\to}0
\end{align*}
by \citeauthor{beutner2018residual} (\citeyear{beutner2018residual}, Lemma 4). Recalling that 
$\lim_{n \to \infty}\frac{1}{n} \sum_{t=1}^n \big|\big|\hat{D}_t\big|\big|^4<\infty$ almost surely and $II_2\overset{p^*}{\to}0$ in probability we establish $II_3\overset{p^*}{\to}0$ in probability.
\end{proof}

\begin{lemma}
\label{lem:5.2}
Suppose Assumptions \ref{as:5.1}--\ref{as:5.4}, \ref{as:5.5}(\ref{as:5.5.1}), \ref{as:5.5}(\ref{as:5.5.3}), \ref{as:5.6}, \ref{as:5.7}, \ref{as:5.9} and \ref{as:5.10} hold with $a=\pm 12$, $b=12$ and $c=6$. Then, we have $C_n^* = \alpha \mu_\alpha \Omega \sqrt{n}\big(\hat{\theta}_n^*-\hat{\theta}_n\big)+o_{p^*}(1)$ in probability.
\end{lemma}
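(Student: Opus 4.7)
The plan is to substitute the Taylor expansion \eqref{eq:5.A.1} for $\eta_t^* - \hat{\eta}_t^*$ into the definition of $C_n^*$ and split the result into a linear main term plus a quadratic remainder. Writing \eqref{eq:5.A.1} as $\eta_t^* - \hat{\eta}_t^* = \hat{D}_t'\big(\hat{\theta}_n^* - \hat{\theta}_n\big)\eta_t^* + R_{t,n}^*$, we obtain
\begin{align*}
C_n^* = -\sqrt{n}\big(\hat{\theta}_n^* - \hat{\theta}_n\big)'\,T_n^* \;-\; \frac{1}{\sqrt{n}}\sum_{t=1}^n R_{t,n}^*\,\mathbbm{1}_{\{\eta_t^* < \hat{\xi}_{n,\alpha}\}},
\end{align*}
where $T_n^* = \frac{1}{n}\sum_{t=1}^n \hat{D}_t\,\eta_t^*\,\mathbbm{1}_{\{\eta_t^* < \hat{\xi}_{n,\alpha}\}}$. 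Since $\sqrt{n}\big(\hat{\theta}_n^* - \hat{\theta}_n\big) = O_{p^*}(1)$ in probability by \citeauthor{beutner2018residual} (\citeyear{beutner2018residual}, Proposition 1), the claim reduces to (i) showing $T_n^* \overset{p^*}{\to} -\alpha\mu_\alpha\Omega$ in probability and (ii) establishing that the remainder is $o_{p^*}(1)$ in probability.

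For step (i), I would work conditionally on the original sample, where $\{\eta_t^*\}$ are i.i.d.\ $\hat{\mathbbm{F}}_n$ and the $\hat{D}_t$ and $\hat{\xi}_{n,\alpha}$ are constants. The crucial observation is that $\EE^*\big[\eta_t^*\mathbbm{1}_{\{\eta_t^* < \hat{\xi}_{n,\alpha}\}}\big]$ does not depend on $t$ and equals $\frac{1}{n}\sum_{s=1}^n \hat{\eta}_s\mathbbm{1}_{\{\hat{\eta}_s < \hat{\xi}_{n,\alpha}\}}$, which tends almost surely to $-\alpha\mu_\alpha$ by the argument underlying Theorem \ref{thm:5.1}. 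Combining this with the a.s.\ consistency of $\hat{\Omega}_n$ yields $\EE^*[T_n^*]\overset{a.s.}{\to}-\alpha\mu_\alpha\Omega$. For the variance, independence of the bootstrap summands under $\PP^*$ and Cauchy-Schwarz give $\Var^*[T_n^*] \leq \frac{1}{n^2}\sum_{t=1}^n \|\hat{D}_t\|^2\,\EE^*[\eta_t^{*2}\mathbbm{1}_{\{\eta_t^* < \hat{\xi}_{n,\alpha}\}}] = O_p(n^{-1})$, since the last expectation equals $\hat{p}_{n,\alpha}+\alpha$ and $\frac{1}{n}\sum_{t=1}^n \|\hat{D}_t\|^2 = O_p(1)$ under Assumption \ref{as:5.9}(ii). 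Chebyshev's inequality then delivers (i).

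For step (ii), an application of Cauchy-Schwarz to the quadratic term in \eqref{eq:5.A.1} yields
\begin{align*}
|R_{t,n}^*| \leq \tfrac{1}{2}\|\hat{\theta}_n^* - \hat{\theta}_n\|^2\,\frac{\tilde{\sigma}_t(\hat{\theta}_n)}{\tilde{\sigma}_t(\breve{\theta}_n)}\Big(\|\tilde{H}_t(\breve{\theta}_n)\| + 2\|\tilde{D}_t(\breve{\theta}_n)\|^2\Big)\,|\eta_t^*|,
\end{align*}
so dropping the indicator and factoring gives
\begin{align*}
\bigg|\frac{1}{\sqrt{n}}\sum_{t=1}^n R_{t,n}^*\,\mathbbm{1}_{\{\eta_t^* < \hat{\xi}_{n,\alpha}\}}\bigg| \leq \frac{1}{2\sqrt{n}}\cdot n\|\hat{\theta}_n^* - \hat{\theta}_n\|^2\cdot\frac{1}{n}\sum_{t=1}^n Q_{t,n}\,|\eta_t^*|
\end{align*}
on the event $\breve{\theta}_n \in \mathscr{V}(\theta_0)$, with $Q_{t,n}$ defined as in the proof of Lemma \ref{lem:5.1}. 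Since $\EE^*|\eta_t^*| \leq (\EE^*[\eta_t^{*2}])^{1/2} \to 1$ a.s., $\frac{1}{n}\sum_{t=1}^n Q_{t,n} = O_p(1)$ by Assumption \ref{as:5.9}, and $\PP^*\big[\breve{\theta}_n \notin \mathscr{V}(\theta_0)\big]\overset{a.s.}{\to}0$ (as in the proof of Lemma \ref{lem:5.1}), combined with $n\|\hat{\theta}_n^* - \hat{\theta}_n\|^2 = O_{p^*}(1)$ we obtain that the remainder is $O_{p^*}(n^{-1/2}) = o_{p^*}(1)$ in probability.

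The step I expect to be the main obstacle is (i): one cannot directly invoke a bootstrap WLLN on the summand $\hat{D}_t\,\eta_t^*\mathbbm{1}_{\{\eta_t^* < \hat{\xi}_{n,\alpha}\}}$ because $\hat{D}_t$ varies with $t$ while only the $\eta_t^*$ are bootstrap-i.i.d. The decoupling relies critically on the fact that $\EE^*[\eta_t^*\mathbbm{1}_{\{\eta_t^* < \hat{\xi}_{n,\alpha}\}}]$ is constant in $t$, which permits factorising the conditional mean into the average of $\hat{D}_t$ against a common bootstrap moment; the matching variance bound is then routine once the fourth-moment bound $\frac{1}{n}\sum_t \|\hat{D}_t\|^4 < \infty$ a.s.\ is invoked.
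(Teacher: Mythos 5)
Your proposal is correct and follows essentially the same route as the paper: insert the Taylor expansion \eqref{eq:5.A.1}, show the linear term $\frac{1}{n}\sum_t \hat{D}_t\,\eta_t^*\mathbbm{1}_{\{\eta_t^*<\hat{\xi}_{n,\alpha}\}}$ converges by computing its conditional mean (constant in $t$, equal to $-\EE^*[\eta_t^*\mathbbm{1}_{\{\eta_t^*<\hat{\xi}_{n,\alpha}\}}]$ times $\hat{\Omega}_n$ up to sign) and conditional variance of order $n^{-1}$, and kill the quadratic remainder with the $Q_{t,n}$-type bound already used for $II_1$ in the proof of Lemma \ref{lem:5.1} together with $\sqrt{n}\big(\hat{\theta}_n^*-\hat{\theta}_n\big)=O_{p^*}(1)$. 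The paper does exactly this, merely citing \citeauthor{beutner2018residual} (Lemmas 2 and 4) for the moment convergences and \eqref{eq:5.A.2} for the remainder instead of rederiving them.
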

\begin{proof}
Inserting \eqref{eq:5.A.1} into the definition of $C_n^*$ we obtain
\begin{align*}
C_n^* &=\underbrace{-\frac{1}{n}\sum_{t=1}^n\hat{D}_t'\eta_t^*\mathbbm{1}_{\{\eta_t^*< \hat{\xi}_{n,\alpha}\}}}_{I}\sqrt{n}\big(\hat{\theta}_n^*-\hat{\theta}_n\big)\\
& \ \ \ \ -\frac{1}{2}\sqrt{n}\big(\hat{\theta}_n^*-\hat{\theta}_n\big)'\underbrace{\frac{1}{\sqrt{n}n}\sum_{t=1}^n\frac{\tilde{\sigma}_t(\hat{\theta}_n)}{\tilde{\sigma}_t(\breve{\theta}_n)}\Big(\tilde{H}_t(\breve{\theta}_n)-2\tilde{D}_t(\breve{\theta}_n)\tilde{D}_t'(\breve{\theta}_n)\Big) \eta_t^*}_{II}\mathbbm{1}_{\{\eta_t^*< \hat{\xi}_{n,\alpha}\}}\sqrt{n}\big(\hat{\theta}_n^*-\hat{\theta}_n\big).
\end{align*}
Since $\sqrt{n}\big(\hat{\theta}_n^*-\hat{\theta}_n\big)\overset{d^*}{\to}N\big(0,\frac{\kappa-1}{4}J^{-1}\big)$ almost surely (\citeauthor{beutner2018residual}, \citeyear{beutner2018residual}, Proposition 1) and $II\overset{p^*}{\to}0$ in probability by \eqref{eq:5.A.2}, it remains to show that $I\overset{p^*}{\to}\alpha \mu_\alpha \Omega'$ in probability. Noting that $-\EE\big[\eta_t\mathbbm{1}_{\{\eta_t< \xi_\alpha\}}\big]=\alpha \mu_\alpha$, we obtain
\begin{align*}
\EE^*[I]=&-\EE^*\big[\eta_t^*\mathbbm{1}_{\{\eta_t^*< \hat{\xi}_{n,\alpha}\}}\big]\frac{1}{n}\sum_{t=1}^n\hat{D}_t'=-\EE^*\big[\eta_t^*\mathbbm{1}_{\{\eta_t^*< \hat{\xi}_{n,\alpha}\}}\big]\hat{\Omega}_n'\overset{a.s.}{\to} \alpha \mu_\alpha \Omega'\\
\Var^*[I]=&\Var^*\big[\eta_t^*\mathbbm{1}_{\{\eta_t^*< \hat{\xi}_{n,\alpha}\}}\big]\frac{1}{n^2}\sum_{t=1}^n\hat{D}_t\hat{D}_t'= \frac{1}{n} \Var^*\big[\eta_t^*\mathbbm{1}_{\{\eta_t^*< \hat{\xi}_{n,\alpha}\}}\big]\hat{J}_n \overset{a.s.}{\to}0
\end{align*}
by \citeauthor{beutner2018residual} (\citeyear{beutner2018residual}, Lemma 2 and 4), which completes the proof.
\end{proof}

\begin{lemma}
\label{lem:5.3}
Suppose Assumptions \ref{as:5.1}--\ref{as:5.4}, \ref{as:5.5}(\ref{as:5.5.1}), \ref{as:5.5}(\ref{as:5.5.3}), \ref{as:5.6}, \ref{as:5.9} and \ref{as:5.10} hold with $a=-1,4$, $b=4$ and $c=2$. Then, we have
\begin{align*}
  \frac{1}{\sqrt{n}}\sum_{t=1}^n    \begin{pmatrix}
       \hat{D}_t \big(\eta_t^{*2}-1\big)\\
   \big(\eta_t^* -\hat{\xi}_{n,\alpha}\big)\mathbbm{1}_{\{\eta_t^*< \hat{\xi}_{n,\alpha}\}}+\alpha_n \big(\hat{\xi}_{n,\alpha}+ \hat{\mu}_{n,\alpha}\big)
      \end{pmatrix}
\overset{d^*}{\to}N (0,\Psi_\alpha)
      % \quad \text{with}\quad ??? =\begin{pmatrix}
     % (\kappa-1)J& \alpha x_\alpha \Omega\\
   %\alpha x_\alpha \Omega' & \alpha^2 \sigma_\alpha^2
     % \end{pmatrix}
\end{align*}
almost surely with
\begin{align*}
    \Psi_\alpha = \begin{pmatrix}
      (\kappa-1)J& \alpha x_\alpha \Omega\\
    \alpha x_\alpha \Omega' & \alpha^2 \sigma_\alpha^2
      \end{pmatrix}
\end{align*}.
\end{lemma}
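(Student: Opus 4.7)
The plan is to apply the Cram\'er--Wold device together with a conditional central limit theorem for row-independent triangular arrays. Fix $\lambda = (\lambda_1',\lambda_2)' \in \R^{r+1}$ with $\lambda_1 \in \R^r$ and $\lambda_2 \in \R$, and set
\[
Z_t^* := \lambda_1' \hat{D}_t\bigl(\eta_t^{*2} - 1\bigr) + \lambda_2\Bigl[\bigl(\eta_t^* - \hat{\xi}_{n,\alpha}\bigr)\mathbbm{1}_{\{\eta_t^* < \hat{\xi}_{n,\alpha}\}} + \alpha_n\bigl(\hat{\xi}_{n,\alpha} + \hat{\mu}_{n,\alpha}\bigr)\Bigr].
\]
Conditionally on the original sample, $\eta_1^*,\dots,\eta_n^*$ are \iid $\hat{\mathbbm{F}}_n$ and $\hat{D}_t,\hat{\xi}_{n,\alpha},\hat{\mu}_{n,\alpha},\alpha_n$ are deterministic, so $\{Z_t^*\}_{t=1}^n$ is row-independent. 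The target reduces via Cram\'er--Wold to showing $n^{-1/2}\sum_{t=1}^n Z_t^* \overset{d^*}{\to} N(0,\lambda'\Psi_\alpha\lambda)$ almost surely for every $\lambda$. For the conditional centering, the definitions of $\hat{\xi}_{n,\alpha}$ and $\hat{\mu}_{n,\alpha}$ give $\EE^*[(\eta_t^*-\hat{\xi}_{n,\alpha})\mathbbm{1}_{\{\eta_t^*<\hat{\xi}_{n,\alpha}\}}] = -\hat{\alpha}_n(\hat{\xi}_{n,\alpha}+\hat{\mu}_{n,\alpha})$ with $\hat{\alpha}_n := n^{-1}\sum\mathbbm{1}_{\{\hat{\eta}_t<\hat{\xi}_{n,\alpha}\}}$ differing from $\alpha_n$ by at most $1/n$, while scale stability (Assumption \ref{as:5.10}) combined with the QMLE first-order condition in the scale coordinate $\tau$ (whose log-derivative $\tilde{D}_{t,\tau}\equiv 1/\tau$ is constant) forces $n^{-1}\sum\hat{\eta}_t^2 = 1$, so $\EE^*[\eta_t^{*2}-1]=0$ exactly; hence $n^{-1/2}\sum_t\EE^*[Z_t^*] = O(n^{-1/2}) \to 0$ almost surely.

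Next I compute the limit of the conditional variance. With $Y_t^* := (\eta_t^*-\hat{\xi}_{n,\alpha})\mathbbm{1}_{\{\eta_t^*<\hat{\xi}_{n,\alpha}\}}$, conditional independence yields
\[
\Var^*\!\Bigl(n^{-1/2}\textstyle\sum_{t=1}^n Z_t^*\Bigr) = \lambda_1'\hat{J}_n\lambda_1\,\Var^*(\eta_t^{*2}) + 2\lambda_2\,\lambda_1'\hat{\Omega}_n\,\Cov^*(\eta_t^{*2},Y_t^*) + \lambda_2^2\,\Var^*(Y_t^*),
\]
and each bootstrap moment is an empirical average of a function of $\hat{\eta}_t$. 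By Theorem \ref{thm:5.1} and \cite{beutner2018residual} (Lemma 2), $\hat{J}_n\to J$, $\hat{\Omega}_n\to\Omega$, $\hat{\xi}_{n,\alpha}\to\xi_\alpha$, $\hat{\mu}_{n,\alpha}\to\mu_\alpha$, $\hat{\kappa}_n\to\kappa$ almost surely, while the empirical tail moments $n^{-1}\sum\hat{\eta}_t^k\mathbbm{1}_{\{\hat{\eta}_t<\hat{\xi}_{n,\alpha}\}}$ for $k=2,3$ converge almost surely to $\EE[\eta_t^k\mathbbm{1}_{\{\eta_t<\xi_\alpha\}}]$; substituting into the closed-form variance and covariance expressions preceding \eqref{eq:5.3.7} identifies the right-hand side with $\lambda'\Psi_\alpha\lambda$ almost surely.

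For Lyapunov's condition the $c_r$-inequality gives $|Z_t^*|^4 \leq C\bigl(\|\hat{D}_t\|^4(1+\eta_t^{*8}) + (|\eta_t^*|+|\hat{\xi}_{n,\alpha}|+\hat{\mu}_{n,\alpha})^4\bigr)$, so
\[
\frac{1}{n^2}\sum_{t=1}^n \EE^*|Z_t^*|^4 \leq \frac{C}{n}\Bigl(\tfrac{1}{n}\textstyle\sum_t\|\hat{D}_t\|^4\Bigr)\bigl(1+\EE^*\eta_t^{*8}\bigr) + \frac{C'}{n}\bigl(1+|\hat{\xi}_{n,\alpha}|^4+\hat{\mu}_{n,\alpha}^4+\EE^*\eta_t^{*4}\bigr).
\]
The reinforced moments in Assumption \ref{as:5.9} ($a=\pm12$, $b=12$, $c=6$) together with Assumption \ref{as:5.5}(\ref{as:5.5.3}) yield $\EE^*\eta_t^{*8}\to\EE\eta_t^8 < \infty$ almost surely and $n^{-1}\sum\|\hat{D}_t\|^4 = O(1)$ almost surely (cf.\ \cite{beutner2018residual}, proof of Lemma 7), so the Lyapunov bound is $O(1/n)$ almost surely. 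The triangular-array Lyapunov CLT then gives $n^{-1/2}\sum_t Z_t^* \overset{d^*}{\to} N(0,\lambda'\Psi_\alpha\lambda)$ almost surely, and Cram\'er--Wold delivers the joint conclusion.

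The main obstacle is in the variance step: pinning down the almost-sure limit of the bootstrap cross-covariance $\Cov^*(\eta_t^{*2},Y_t^*)$ and verifying that multiplication by $\hat{\Omega}_n\to\Omega$ reproduces the $(1,2)$-block of $\Psi_\alpha$ exactly. This reduces to establishing almost-sure convergence of $n^{-1}\sum_t\hat{\eta}_t^k\mathbbm{1}_{\{\hat{\eta}_t<\hat{\xi}_{n,\alpha}\}}$ to $\EE[\eta_t^k\mathbbm{1}_{\{\eta_t<\xi_\alpha\}}]$ along the drifting threshold $\hat{\xi}_{n,\alpha}\to\xi_\alpha$, which requires uniform empirical-process arguments; once these tail-moment convergences are secured, the remaining arithmetic is routine identification using the closed-form expressions preceding \eqref{eq:5.3.7}.
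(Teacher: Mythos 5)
Your overall architecture matches the paper's: Cram\'er--Wold reduction, exact conditional centering of the first block via scale stability and the QML first-order condition (and of the second block via the definitions of $\hat{\xi}_{n,\alpha}$, $\hat{\mu}_{n,\alpha}$, $\alpha_n$), identification of the conditional variance through bootstrap moments that are empirical averages of residual functionals, and a triangular-array CLT. The variance-identification step you flag as the ``main obstacle'' is precisely what the paper disposes of by citing \citeauthor{beutner2018residual} (\citeyear{beutner2018residual}, Lemma 5), which gives almost-sure convergence of $n^{-1}\sum_t \hat{\eta}_t^k \mathbbm{1}_{\{\hat{\eta}_t<\hat{\xi}_{n,\alpha}\}}$ along the drifting threshold; leaving that uncited but acknowledged is a minor omission.

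The genuine gap is in your verification of the CLT condition. You use a Lyapunov condition with fourth powers of $Z_t^*$, which forces you to control $\EE^*[\eta_t^{*8}]=n^{-1}\sum_t\hat{\eta}_t^8$, and you justify it by claiming $\EE^*[\eta_t^{*8}]\to\EE[\eta_t^8]<\infty$ ``from the reinforced moments in Assumption \ref{as:5.9} ($a=\pm 12$, $b=12$, $c=6$)''. This fails on two counts. First, Lemma \ref{lem:5.3} only assumes $a=-1,4$, $b=4$, $c=2$; the larger orders belong to Lemma \ref{lem:5.1}/Theorem \ref{thm:5.3}, and in any case Assumption \ref{as:5.9} concerns volatility-derivative ratios, not moments of $\eta_t$. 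Second, the only innovation-moment condition available is Assumption \ref{as:5.5}(\ref{as:5.5.3}), $\EE[\eta_t^4]<\infty$; the eighth moment may well be infinite --- indeed it is infinite for the (normalized) Student-$t$ with $6$ degrees of freedom used as the paper's benchmark DGP --- so the claimed limit does not exist in general and your Lyapunov bound is unsupported. (One could try to rescue the route by showing $n^{-1}\sum_t\hat{\eta}_t^8=o(n)$ a.s.\ under fourth moments, but you do not do this.) The paper avoids the issue by checking the Lindeberg condition directly: it splits on $\{|\eta_t^*|>C\}$ versus $\{|\eta_t^*|\le C\}$, handles the first piece with $\EE^*[\eta_t^{*4}\mathbbm{1}_{\{|\eta_t^*|>C\}}]$ (only fourth moments needed) made small by choosing $C$ large, and kills the second piece using $\max_t\|\hat{D}_t\|/\sqrt{n}\overset{a.s.}{\to}0$, before invoking the triangular-array CLT. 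You should replace your Lyapunov step by such a truncation-based Lindeberg verification to stay within the stated assumptions.
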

\begin{proof}
%We expand 
%\begin{align*}
%  &\frac{1}{\sqrt{n}}\sum_{t=1}^n    \begin{pmatrix}
%       \hat{D}_t \big(\eta_t^{*2}-1\big)\\
%   \big(\eta_t^* -\hat{\xi}_{n,\alpha}\big)\mathbbm{1}_{\{\eta_t^*< \hat{\xi}_{n,\alpha}\}}+\alpha_n \big(\hat{\xi}_{n,\alpha}+ \hat{\mu}_{n,\alpha}\big)
%      \end{pmatrix}\\
%      =&\! \frac{1}{\sqrt{n}}\sum_{t=1}^n    \begin{pmatrix}
%       \hat{D}_t \big(\eta_t^{*2}-\EE^*[\eta_t^{*2}]\big)\\
%   \big(\eta_t^* -\hat{\xi}_{n,\alpha}\big)\mathbbm{1}_{\{\eta_t^*< \hat{\xi}_{n,\alpha}\}}+\alpha_n \big(\hat{\xi}_{n,\alpha}+ \hat{\mu}_{n,\alpha}\big)
%      \end{pmatrix} \!+ \! \frac{1}{\sqrt{n}} \sum_{t=1}^n  \!  \begin{pmatrix}
%       \hat{D}_t \big(\EE^*[\eta_t^{*2}]-1\big)\\
 %  0
%      \end{pmatrix}.
%\end{align*}
%
\citeauthor{beutner2018residual} (\citeyear{beutner2018residual}, proof of Lemma 7) shows that $\frac{1}{\sqrt{n}} \sum_{t=1}^n \hat{D}_t \big(\EE^*[\eta_t^{*2}]-1\big)=0$ for sufficiently large $n$ almost surely since $\hat{\theta}_n\overset{a.s.}{\to}\theta_0 \in \mathring{\Theta}$ and $\EE^*\big[\eta_t^{*2}\big]=1$ whenever $\hat{\theta}_n \in \mathring{\Theta}$ under Assumption \ref{as:5.10}. It remains to show that for each  $\lambda=(\lambda_1',\lambda_2)'  \in \R^{r+1}$ with $||\lambda||\neq 0$
%Next, we turn to $I$; it is sufficient to prove that for each  $\lambda=(\lambda_1',\lambda_2)'  \in \R^{r+1}$ with $||\lambda||\neq 0$ 
%
\begin{align*}
 \sum_{t=1}^n \underbrace{\frac{1}{\sqrt{n}}\lambda'  \begin{pmatrix}
       \hat{D}_t \big(\eta_t^{*2}-\EE^*[\eta_t^{*2}]\big)\\
   \big(\eta_t^* -\hat{\xi}_{n,\alpha}\big)\mathbbm{1}_{\{\eta_t^*< \hat{\xi}_{n,\alpha}\}}+\alpha_n \big(\hat{\xi}_{n,\alpha}+ \hat{\mu}_{n,\alpha}\big)
      \end{pmatrix}}_{Z_{n,t}^*}
    \overset{d^*}{\to}N \left(0,\lambda'\Psi_\alpha \lambda\right)
\end{align*}
almost surely by the Cram\'er-Wold device. By construction, we have $\EE^*\big[Z_{n,t}^*\big]=0$. Further, we have that $s_n^2 =\sum_{t=1}^n\Var^*\big[Z_{n,t}^{*}\big]$ is equal to
\begin{align}
\lambda'\begin{pmatrix}
      \Var^*[\eta_t^{*2}]\hat{J}_n& \Cov^*[\eta_t^{*2},\big(\eta_t^* -\hat{\xi}_{n,\alpha}\big)\mathbbm{1}_{\{\eta_t^*< \hat{\xi}_{n,\alpha}\}}] \hat{\Omega}_n\\
    \Cov^*[\eta_t^{*2},\big(\eta_t^* -\hat{\xi}_{n,\alpha}\big)\mathbbm{1}_{\{\eta_t^*< \hat{\xi}_{n,\alpha}\}}] \hat{\Omega}_n' & \Var^*[\big(\eta_t^* -\hat{\xi}_{n,\alpha}\big)\mathbbm{1}_{\{\eta_t^*< \hat{\xi}_{n,\alpha}\}}] 
      \end{pmatrix}\lambda.
\end{align}
\citeauthor{beutner2018residual} (\citeyear{beutner2018residual}, Lemma 2) gives $\hat{J}_n \overset{a.s.}{\to}J$ and $\hat{\Omega}_n\overset{a.s.}{\to}\Omega$. Further, \citeauthor{beutner2018residual} (\citeyear{beutner2018residual}, Lemma 5) implies
\begin{align*}
&\Var^*[\big(\eta_t^* -\hat{\xi}_{n,\alpha}\big)\mathbbm{1}_{\{\eta_t^*< \hat{\xi}_{n,\alpha}\}}] \\
%%%
%=& \Var^*[\eta_t^* \mathbbm{1}_{\{\eta_t^*< \hat{\xi}_{n,\alpha}\}}] + \hat{\xi}_{n,\alpha}^2\Var^*[\mathbbm{1}_{\{\eta_t^*< \hat{\xi}_{n,\alpha}\}}]-2\hat{\xi}_{n,\alpha}\Cov^*[\eta_t^* \mathbbm{1}_{\{\eta_t^*< \hat{\xi}_{n,\alpha}\}},\mathbbm{1}_{\{\eta_t^*< \hat{\xi}_{n,\alpha}\}}]\\
%%%
&=  \EE^*[\eta_t^{*2} \mathbbm{1}_{\{\eta_t^*< \hat{\xi}_{n,\alpha}\}}]- \Big(\EE^*[\eta_t^{*} \mathbbm{1}_{\{\eta_t^*< \hat{\xi}_{n,\alpha}\}}]\Big)^2 
+\hat{\xi}_{n,\alpha}^2 \EE^*[ \mathbbm{1}_{\{\eta_t^*< \hat{\xi}_{n,\alpha}\}}] \Big(1-\EE^*[ \mathbbm{1}_{\{\eta_t^*< \hat{\xi}_{n,\alpha}\}}]\Big)\\
& \ \ \  -2 \hat{\xi}_{n,\alpha} \bigg(\EE^*[\eta_t^* \mathbbm{1}_{\{\eta_t^*< \hat{\xi}_{n,\alpha}\}}]-\EE^*[\eta_t^* \mathbbm{1}_{\{\eta_t^*< \hat{\xi}_{n,\alpha}\}}]\: \EE^*[ \mathbbm{1}_{\{\eta_t^*< \hat{\xi}_{n,\alpha}\}}]\bigg)\\
%%%
&\overset{a.s.}{\to} \EE[\eta_t^2 \mathbbm{1}_{\{\eta_t< \xi_\alpha\}}]- \Big(\EE[\eta_t \mathbbm{1}_{\{\eta_t< \xi_\alpha\}}]\Big)^2+ \xi_\alpha^2\EE[ \mathbbm{1}_{\{\eta_t< \xi_\alpha\}}] \Big(1-\EE[ \mathbbm{1}_{\{\eta_t< \xi_\alpha\}}]\Big)\\
& \ \ \ \ -2 \xi_\alpha \bigg(\EE[\eta_t \mathbbm{1}_{\{\eta_t< \xi_\alpha\}}]-\EE[\eta_t \mathbbm{1}_{\{\eta_t< \xi_\alpha\}}]\: \EE[ \mathbbm{1}_{\{\eta_t< \xi_\alpha\}}]\bigg)\\
%%%
%=& \Var[\eta_t \mathbbm{1}_{\{\eta_t< \xi_\alpha\}}] + \xi_\alpha^2\Var[\mathbbm{1}_{\{\eta_t< \xi_\alpha\}}]-2\xi_\alpha\Cov[\eta_t \mathbbm{1}_{\{\eta_t< \xi_\alpha\}},\mathbbm{1}_{\{\eta_t< \xi_\alpha\}}]\\
%
&= \Var[\big(\eta_t -\xi_\alpha\big)\mathbbm{1}_{\{\eta_t< \xi_\alpha\}}]
\end{align*}
and
\begin{align*}
&\Cov^*[\eta_t^{*2},\big(\eta_t^* -\hat{\xi}_{n,\alpha}\big)\mathbbm{1}_{\{\eta_t^*< \hat{\xi}_{n,\alpha}\}}]\\
%%%
%=&\Cov^*[\eta_t^{*2},\eta_t^* \mathbbm{1}_{\{\eta_t^*< \hat{\xi}_{n,\alpha}\}}]-\hat{\xi}_{n,\alpha}\Cov^*[\eta_t^{*2},\mathbbm{1}_{\{\eta_t^*< \hat{\xi}_{n,\alpha}\}}]\\
%%%
&= \EE^*[\eta_t^{*3}\mathbbm{1}_{\{\eta_t^*< \hat{\xi}_{n,\alpha}\}}]- \EE^*[\eta_t^{*2}]\:\EE^*\big[\eta_t^* \mathbbm{1}_{\{\eta_t^*< \hat{\xi}_{n,\alpha}\}}\big]\\
& \ \ \ \ -\hat{\xi}_{n,\alpha} \bigg( \EE^*[\eta_t^{*2}\mathbbm{1}_{\{\eta_t^*< \hat{\xi}_{n,\alpha}\}}]-\EE^*\big[\eta_t^{*2}\big]\:\EE^*\big[\mathbbm{1}_{\{\eta_t^*< \hat{\xi}_{n,\alpha}\}}\big]\bigg)\\
%%%
&\overset{a.s.}{\to} \EE[\eta_t^{3}\mathbbm{1}_{\{\eta_t< \xi_\alpha\}}]- \EE[\eta_t^{2}]\:\EE\big[\eta_t \mathbbm{1}_{\{\eta_t< \xi_\alpha\}}\big] -\xi_\alpha \bigg(\EE[\eta_t^{2}\mathbbm{1}_{\{\eta_t< \xi_\alpha\}}]-\EE\big[\eta_t^{2}\big]\:\EE\big[\mathbbm{1}_{\{\eta_t< \xi_\alpha}\}\big] \bigg)\\
%
%=&\Cov[\eta_t^{2},\eta_t \mathbbm{1}_{\{\eta_t< \xi_\alpha\}}]-\xi_\alpha\Cov[\eta_t^{2},\mathbbm{1}_{\{\eta_t< \xi_\alpha\}}]\\
%
&= \Cov[\eta_t^{2},\big(\eta_t -\xi_\alpha\big)\mathbbm{1}_{\{\eta_t< \xi_\alpha\}}]
\end{align*}
as well as
\begin{align*}
    \Var^*\big[\eta_t^{*2}\big] &= \EE^*\big[\eta_t^{*4}\big]-\Big(\EE\big[\eta_t^{*2}\big]\Big)^2\overset{a.s.}{\to}\kappa-1.
\end{align*}
Thus, we get $s_n^2\overset{a.s.}{\to} \lambda'\Psi_\alpha \lambda$. 
Next, we verify Lindeberg condition. 
%
%Noting that $\big(\lambda_1'\hat{D}_t\big)^2\leq ||\lambda_1||^2\max_{t=1,\dots,n}||\hat{D}_t||^2$ we obtain
For any $\varepsilon>0$ 
\begin{align*}
&  \sum_{t=1}^n  \EE^*\big[Z_{n,t}^{*2}\mathbbm{1}_{\{|Z_{n,t}^{*}|\geq s_n \varepsilon \}}\big] \leq  \underbrace{\sum_{t=1}^n\EE^*\big[Z_{n,t}^{*2}\mathbbm{1}_{\{|\eta_t^{*}|> C \}}\big]}_{I}+ \underbrace{\sum_{t=1}^n  \EE^*\big[Z_{n,t}^{*2}\mathbbm{1}_{\{|Z_{n,t}^{*}|\geq s_n \varepsilon \}}\mathbbm{1}_{\{|\eta_t^{*}|\leq C \}}\big]}_{II}
\end{align*}
holds, where $C>0$. Employing the elementary inequalities $(x+y)^z\leq 2^z(x^z+y^z) $ and $|x-y|^z\leq x^z+y^z$ for all $x,y,z\geq 0$ we find that
\begin{align*}
Z_{n,t}^{*2} \leq & \frac{8}{n} \Big(\big(\lambda_1'\hat{D}_t\big)^2 \big(\eta_t^{*4}+\EE^*[\eta_t^{*2}]^2\big)+
    \lambda_2^2 \big(\eta_t^{*2}+\hat{\xi}_{n,\alpha}^2+\hat{\mu}_{n,\alpha}^2\big)\Big).
\end{align*}
Thus, we obtain
\begin{align*}
I \leq &  \frac{8}{n}\sum_{t=1}^n\EE^*\bigg[ \Big(\big(\lambda_1'\hat{D}_t\big)^2 \big(\eta_t^{*4}+\EE^*[\eta_t^{*2}]^2\big)+
    \lambda_2^2 \big(\eta_t^{*2}+\hat{\xi}_{n,\alpha}^2+\hat{\mu}_{n,\alpha}^2\big)\Big)\mathbbm{1}_{\{|\eta_t^{*}|> C \}}\bigg]\\
%%%
= & 8 \Big( \lambda_1' \hat{J}_n \lambda_1 \EE^*\big[\eta_t^{*4}\mathbbm{1}_{\{|\eta_t^{*}|> C \}}\big]+\lambda_2^2 E^*\big[\eta_t^{*2}\mathbbm{1}_{\{|\eta_t^{*}|> C \}}\big]\\
&\qquad +\big(\lambda_1' \hat{J}_n \lambda_1 \EE^*[\eta_t^{*2}]^2+
    \lambda_2^2(\hat{\xi}_{n,\alpha}^2+\hat{\mu}_{n,\alpha}^2)\big)\EE^*\big[\mathbbm{1}_{\{|\eta_t^{*}|> C \}}\big]\Big)\\
    \overset{a.s.}{\to}&8 \Big( \lambda_1' J \lambda_1 \EE\big[\eta_t^{4}\mathbbm{1}_{\{|\eta_t|> C \}}\big]+\lambda_2^2 E\big[\eta_t^{2}\mathbbm{1}_{\{|\eta_t|> C \}}\big]\\
&\qquad +\big(\lambda_1' J \lambda_1 \EE[\eta_t^{2}]^2+
    \lambda_2^2(\xi_{\alpha}^2+\mu_{\alpha}^2)\big)\EE\big[\mathbbm{1}_{\{|\eta_t|> C \}}\big]\Big)
\end{align*}
and choosing $C$ sufficiently large yields $I\overset{a.s.}{\to}0$. Given a value of $C$, we have
\begin{align*}
II\leq  &  \frac{8}{n}  \sum_{t=1}^n \EE^*\bigg[ \Big(\big(\lambda_1'\hat{D}_t\big)^2 \big(\eta_t^{*4}+\EE^*[\eta_t^{*2}]^2\big)+
    \lambda_2^2 \big(\eta_t^{*2}+\hat{\xi}_{n,\alpha}^2+\hat{\mu}_{n,\alpha}^2\big)\Big)\\
    & \qquad \qquad \quad \times \mathbbm{1}_{ \{ ||\lambda_1|| (\eta_t^{*2}+\EE^*[\eta_t^{*2}]) \max_{t} ||\hat{D}_t||+
    |\lambda_2|(|\eta_t^*|+|\hat{\xi}_{n,\alpha}|+|\hat{\mu}_{n,\alpha}|)\geq \sqrt{n} s_n \varepsilon \}}\mathbbm{1}_{\{|\eta_t^{*}|\leq C \}}\bigg]\\
%%%
\leq  &  \frac{8}{n}  \sum_{t=1}^n  \Big(\big(\lambda_1'\hat{D}_t\big)^2 \big(C^4+\EE^*[\eta_t^{*2}]^2\big)+
    \lambda_2^2 \big(C^2+\hat{\xi}_{n,\alpha}^2+\hat{\mu}_{n,\alpha}^2\big)\Big)\\
    & \qquad \qquad \quad \times \mathbbm{1}_{ \{ ||\lambda_1|| (C^2+\EE^*[\eta_t^{*2}]) \max_{t} ||\hat{D}_t||+
    |\lambda_2|(C+|\hat{\xi}_{n,\alpha}|+|\hat{\mu}_{n,\alpha}|)\geq \sqrt{n} s_n \varepsilon \}}\\
%%%
\leq  &  8  \Big(\lambda_1' \hat{J}_n \lambda_1 \big(C^4+\EE^*[\eta_t^{*2}]^2\big)+
    \lambda_2^2 \big(C^2+\hat{\xi}_{n,\alpha}^2+\hat{\mu}_{n,\alpha}^2\big)\Big)\\
    & \qquad \qquad \quad \times \mathbbm{1}_{ \{ ||\lambda_1|| (C^2+\EE^*[\eta_t^{*2}]) \max_{t} ||\hat{D}_t||+
    |\lambda_2|(C+|\hat{\xi}_{n,\alpha}|+|\hat{\mu}_{n,\alpha}|)\geq \sqrt{n} s_n \varepsilon \}}\\    
    %%%
    \overset{a.s.}{\to}& 8  \Big(\lambda_1' J \lambda_1 \big(C^4+\EE[\eta_t^{2}]^2\big)+
    \lambda_2^2 \big(C^2+\xi_{\alpha}^2+\mu_{\alpha}^2\big)\Big)\times 0 = 0
\end{align*}
as $\max_t||\hat{D}_t||/\sqrt{n}\overset{a.s.}{\to}0$. Combining results, gives $\frac{1}{s_n^2}\sum_{t=1}^n  \EE^*\big[Z_{n,t}^{*2}\mathbbm{1}_{\{|Z_{n,t}^{*}|\geq s_n \epsilon \}}\big] \overset{a.s.}{\to}0$.
 The Central Limit Theorem for triangular arrays (c.f.\ \citeauthor{billingsley1986probability}, \citeyear{billingsley1986probability}, Theorem 27.3) implies that $\sum_{t=1}^n Z_{n,t}^*$ converges in conditional distribution to $N\big(0,\lambda'\Psi_\alpha\lambda \big)$ almost surely, which completes the proof. 
\end{proof}

\subsection{Derivation of Analytical Expressions}
\label{app:5.B}

Let $Y \sim t_{\nu}$ with cdf $F_{\nu}$ and pdf $f_{\nu}$, where $t_{\nu}$ denotes the Student-$t$ distribution with $\nu$ degrees of freedom. Define $\sigma^{2}_{\nu} = \frac{\nu-2}{\nu}$ such that $\eta = \sigma_{\nu}Y$ is now appropriately standardized such that $\EE[\eta^{2}] = 1$. Then we get
\begin{align*}
&F(x) = \PP(\eta \leq x) = \PP \left(Y \leq \sigma^{-1}_{\nu} x \right) = F_{\nu} \left( \sigma^{-1}_{\nu} x \right) \\    
&\xi_{\alpha} = F^{-1}(\alpha) = \sigma_{\nu} F^{-1}_{\nu} (\alpha).    
\end{align*}
The following relationship links the (conditional) moments of $\eta$ and $Y$: 
\begin{equation}
\label{eq:5.B.1}
\EE [\eta^{m} | \eta < \xi_{\alpha}] = \sigma_{\nu}^{m} \EE [Y^{m} | Y < F^{-1}_{\nu} (\alpha)]
\end{equation}
with $m \in \N$. Using moments of the truncated Student-$t$ distribution derived in \citeauthor{kim2008} (\citeyear{kim2008}, p. 84) we can find closed form expressions for the conditional expectations of $Y^m$. For $m=1$ and any $b \in \R$ we have\footnote{We only truncate from above, hence the lower truncation bound of \cite{kim2008} is $a=-\infty$.}
\begin{align*}
\EE \left[ Y | Y < b \right] 
&= -\frac{\Gamma(\frac{\nu-1}{2}) \nu^{\nu/2}}{2 F_{\nu}(b) \Gamma(\frac{\nu}{2}) \sqrt{\pi}} (\nu + b^2)^{-\frac{\nu-1}{2}}
= -\frac{\Gamma(\frac{\nu-1}{2}) \sqrt{\nu}}{2 F_{\nu}(b) \frac{\nu-2}{2}\Gamma(\frac{\nu-2}{2}) \sqrt{\pi}} \left( 1 + \frac{b^2}{\nu} \right)^{-\frac{\nu-1}{2}} \\
%%%
%&= -\frac{\Gamma(\frac{\nu-1}{2}) \sqrt{\nu}}{2 F_{\nu}(b) \frac{\nu-2}{2}\Gamma(\frac{\nu-2}{2}) \sqrt{\pi}} \left( 1 + \frac{(\sigma_\nu b)^2}{\nu-2} \right)^{-\frac{\nu-1}{2}} \\
%%%
&= -\frac{\Gamma(\frac{\nu-1}{2}) }{ F_{\nu}(b) \sigma_\nu \sqrt{\nu-2}\Gamma(\frac{\nu-2}{2}) \sqrt{\pi}} \left( 1 + \frac{(\sigma_\nu b)^2}{\nu-2} \right)^{-\frac{\nu-1}{2}} = - \frac{f_{\nu-2} \left( \sigma_{\nu} b \right)}{ F_{\nu} (b)\sigma_{\nu} },
\end{align*}
where we recognize that $\Gamma\big(\frac{\nu}{2}\big) = \frac{\nu-2}{2}\Gamma\big(\frac{\nu-2}{2}\big)$. Together with \eqref{eq:5.B.1} we have
\begin{align}
\label{eq:5.B.2}
\EE [\eta | \eta < \xi_{\alpha}] = \sigma_{\nu} \EE \left[ Y | Y < F_{\nu}^{-1} (\alpha) \right] = -\frac{f_{\nu-2} \left( \xi_{\alpha}\right)}{\alpha}.
\end{align}
Similarly, we can derive for $m=2$ and any $b \in \R$
\begin{align*}
\EE [Y^{2} | Y < b] %&= \sigma^{-2}_{\nu} + \frac{\Gamma(\frac{\nu-1}{2})\nu^{\nu/2}}{2F_{\nu}(b)\Gamma\left(\frac{\nu}{2}\right)\sqrt{\pi}}\left( - b(\nu + b^2)^{-\frac{\nu-1}{2}} \right) \\
&= \frac{1}{\sigma^{2}_{\nu}} - b \frac{\Gamma(\frac{\nu-1}{2})\nu^{\nu/2}}{2F_{\nu}(b)\Gamma\left(\frac{\nu}{2}\right)\sqrt{\pi}} \left(\nu + b^2 \right)^{-\frac{\nu-1}{2}} 
= \frac{1}{\sigma^{2}_{\nu}} + b \EE[Y | Y < b].     
\end{align*}
Combined with \eqref{eq:5.B.1} we arrive at 
\begin{align*}
\EE [\eta^2 | \eta < \xi_{\alpha}] 
&= \sigma^{2}_{\nu} \EE [Y^{2} | Y < F_{\nu}^{-1} (\alpha)] 
%%%
= \sigma^{2}_{\nu} \left( \frac{1}{\sigma^{2}_{\nu}} + F^{-1}_{\nu}(\alpha) \EE[Y | Y < F^{-1}_{\nu} (\alpha)] \right) \\
&= 1 + \sigma^{2}_{\nu} F^{-1}_{\nu} (\alpha) \EE[Y | Y < F^{-1}_{\nu} (\alpha)]  
%&= 1 - F^{-1}_{\nu} (\alpha) \frac{\sigma^{-1}_{\nu} f_{\nu-2}\left(\sigma_{\nu} F_{\nu}^{-1} (\alpha)\right)}{\alpha} \\
= 1 -  \xi_{\alpha} \frac{f_{\nu-2} (\xi_{\alpha})}{\alpha}.
\end{align*}
Now, as $\PP \left[ \eta < \xi_{\alpha} \right] = \alpha$, we obtain
\begin{align}
\label{eq:5.B.3}
   \EE \left[ \eta^{2} \mathbbm{1}_{\{\eta < \xi_{\alpha}\}} \right] - \alpha = \EE \left[ \eta^{2} |\eta < \xi_{\alpha} \right] \PP \left[ \eta < \xi_{\alpha} \right] -\alpha = - \xi_{\alpha} f_{\nu-2} (\xi_{\alpha}).
\end{align}
Finally, we consider $m=3$. Using $\Gamma(\frac{\nu-1}{2}) = \frac{\nu-3}{2}\Gamma(\frac{\nu-3}{2})$ or equivalently $\Gamma(\frac{\nu-3}{2}) = \frac{2}{\nu-3}\Gamma(\frac{\nu-1}{2})$, it follows that 
\begin{align*}
\EE [Y^3 | Y < b] 
&= - \frac{\Gamma(\frac{\nu-3}{2})\nu^{\nu/2}}{2F_{\nu}(b) \Gamma\left(\frac{\nu}{2}\right)\sqrt{\pi}} (\nu + b^2)^{-\frac{\nu-3}{2}} - b^{2} \frac{\Gamma(\frac{\nu-1}{2})\nu^{\nu/2}}{2F_{\nu}(b)\Gamma\left(\frac{\nu}{2}\right)\sqrt{\pi}} (\nu + b^2)^{-\frac{\nu-1}{2}} \\
%%%
%&= - \left( \frac{2(\nu + b^2)}{\nu-3} \right) \frac{\Gamma(\frac{\nu-1}{2})\nu^{\nu/2}}{2F_{\nu}(b)\Gamma\left(\frac{\nu}{2}\right)\sqrt{\pi}} (\nu+b^2)^{-\frac{\nu-1}{2}} - b^2 \frac{\Gamma(\frac{\nu-1}{2})\nu^{\nu/2}}{2F_{\nu}(b)\Gamma\left(\frac{\nu}{2}\right)\sqrt{\pi}} (\nu + b^2)^{-\frac{\nu-1}{2}} \\
%%%
&= - \left( \frac{2(\nu + b^2)}{\nu-3}+b^2 \right) \frac{\Gamma(\frac{\nu-1}{2})\nu^{\nu/2}}{2F_{\nu}(b)\Gamma\left(\frac{\nu}{2}\right)\sqrt{\pi}} (\nu+b^2)^{-\frac{\nu-1}{2}}\\
%%%
&= \left(\frac{2(\nu+b^2)}{\nu-3} + b^{2} \right) \EE[Y | Y < b],
\end{align*}
which leads to
\begin{align*}
\EE [\eta^{3} | \eta < \xi_{\alpha}]
&= \sigma^{3}_{\nu} \EE[Y^{3} | Y < F_{\nu}^{-1} (\alpha)] 
= \sigma_{\nu}^{3} \left(\frac{2(\nu + \sigma_{\nu}^{-2}\xi_{\alpha}^{2})}{\nu-3} + \left( \frac{\xi_{\alpha}}{\sigma_{\nu}} \right)^{2} \right) \EE[Y | Y < F^{-1}_{\nu} (\alpha)] \\
%%%
&= \left( \frac{2(\nu\sigma_{\nu}^{2} + \xi_{\alpha}^{2})}{\nu-3} + \xi_{\alpha}^{2} \right)
\sigma_{\nu}\EE[Y | Y < F^{-1}_{\nu} (\alpha)] \\
%%%
&= -\left( \frac{2(\nu\sigma_{\nu}^{2} + \xi_{\alpha}^{2})}{\nu-3} + \xi_{\alpha}^{2} \right)
\frac{f_{\nu-2}(\xi_{\alpha})}{\alpha}.
\end{align*}
Thus, we obtain
\begin{align}
\label{eq:8732534}
\EE \left[ \eta^{3} \mathbbm{1}_{\{\eta < \xi_{\alpha}\}} \right] =
\EE \left[ \eta^{3} |\eta < \xi_{\alpha} \right] \PP \left[ \eta < \xi_{\alpha} \right]
= -\left( \frac{2(\nu\sigma_{\nu}^{2} + \xi_{\alpha}^{2})}{\nu-3} + \xi_{\alpha}^{2} \right)
f_{\nu-2}(\xi_{\alpha}).
\end{align}
From \eqref{eq:5.B.2}--\eqref{eq:8732534}, we get the following expressions for the quantities $\mu_{\alpha}$, $p_{\alpha}$ and $q_{\alpha}$:
\begin{align*}
\mu_{\alpha} &= -\EE [\eta | \eta < \xi_{\alpha}] = \frac{f_{\nu-2}(\xi_{\alpha})}{\alpha} \\
%%%
p_{\alpha} &= \EE \left[ \eta^{2} \mathbbm{1}_{\{\eta < \xi_{\alpha}\}} \right] - \alpha = - \xi_{\alpha} f_{\nu-2} (\xi_{\alpha}) \\
%%%
q_{\alpha} &= \EE \left[ \eta^{3} \mathbbm{1}_{\{\eta < \xi_{\alpha}\}} \right] =
-\left( \frac{2(\nu\sigma_{\nu}^{2} + \xi_{\alpha}^{2})}{\nu-3} + \xi_{\alpha}^{2} \right)f_{\nu-2}(\xi_{\alpha}).
\end{align*}
%\subsection{Standard Normal Distribution}\label{app:6.B.2}
Note that the Student-$t$ distribution approaches the standard normal distribution as $\nu \rightarrow \infty$. In that case, $\sigma_{\nu} \rightarrow 1$ and also $f_{\nu}(\cdot) \rightarrow \phi (\cdot)$ and $F_{\nu}(\cdot) \rightarrow \Phi(\cdot)$, i.e.\ the standard normal pdf and cdf, respectively. Hence, when $\eta$ is standard normally distributed, we have
$\xi_{\alpha} = \Phi^{-1} (\alpha)$ as well as $\mu_{\alpha} =  \frac{\phi(\xi_{\alpha})}{\alpha}$, $p_{\alpha} = -\xi_{\alpha}\phi(\xi_{\alpha})$ and $q_{\alpha} = -(2+\xi_{\alpha}^{2})\phi(\xi_{\alpha})$.

\newpage

\singlespacing
\bibliographystyle{Chicago-modified}
%\bibliography{bibliography}

\end{document}